\documentclass[11pt]{article}
\usepackage[utf8]{inputenc}
\usepackage[T1]{fontenc}
\usepackage[margin=1in]{geometry}

\usepackage{amsmath,amssymb,amsthm}
\usepackage{graphicx}
\usepackage{enumitem}
\usepackage{array}
\usepackage{changepage}   % adjustwidth for the wide table
\usepackage{caption}
\usepackage{cite}
\usepackage{listings}
\usepackage[hidelinks]{hyperref}

\newlength\savedwidth

\newcommand\thickhline{\noalign{\global\savedwidth\arrayrulewidth\global\arrayrulewidth 2pt}%
  \hline
  \noalign{\global\arrayrulewidth\savedwidth}}
\newcolumntype{+}{!{\vrule width 2pt}}

\newtheorem{definition}{Definition}
\newtheorem{proposition}{Proposition}
\newtheorem{lemma}{Lemma}
\newtheorem{theorem}{Theorem}

\title{Combining model checking with simulation-based techniques for protocol verification}
\author{Takanori Ishibashi\thanks{takanori.ishibashi@jaist.ac.jp} \qquad Kazuhiro Ogata\\[4pt]
School of Information Science,\\ Japan Advanced Institute of Science and Technology,\\ Asahidai, Nomi, Ishikawa, Japan}
\date{}

\begin{document}
\maketitle

\begin{abstract}
	Model checking is a powerful technique for verifying properties of systems or protocols, but it often suffers from the state space explosion problem.
	For protocols such as the Alternating Bit Protocol (ABP) and the Sliding Window Protocol (SWP), increasing parameters such as channel capacities or window sizes leads to a reachable state space that is infeasible for exhaustive model checking.
	To address this, we propose a technique that combines model checking with simulation relations.
	We focus on three protocols: the Simple Communication Protocol (SCP), ABP, and SWP.
	Formalized as state machines, these protocols are presented in decreasing order of abstraction: SCP, ABP, and then SWP, which results in a corresponding increase in reachable states.
	While SCP is verifiable through direct model checking without state space explosion, the direct application of model checking to ABP and SWP becomes infeasible as their parameter values increase.
	We demonstrate that ABP satisfies its invariant property by combining model checking on SCP with a simulation relation from ABP to SCP.
	Furthermore, we demonstrate that SWP satisfies its invariant property through the composition of simulation relations from SWP to ABP and from ABP to SCP, in conjunction with model checking on SCP.
	These approaches enable the formal verification of protocols with large reachable state spaces by conducting model checking on a significantly  smaller state machine.
\end{abstract}

\section{Introduction}\label{sec:introduction}
Model checking \cite{Clarke2008} is a significant achievement in computer science, widely used in the hardware and software industries.
It is very important to verify the correctness of hardware and software systems to ensure their reliability. Model checking is one of the most advanced techniques for verifying that a finite-state system satisfies its desired properties.
However, model checking has several challenges, of which state space explosion is the most notorious.
It often makes it infeasible to conduct model checking experiments.
Many researchers have addressed this problem and developed a variety of techniques, such as abstraction \cite{edmund1, edmund2, jose1, yin1, yin2, bae, modelchecking01, modelchecking02} and partial order reduction \cite{Clarke1999, modelchecking01, DBLP:journals/pacmpl/EneaGKM24}.
Despite these efforts, the challenge remains.
It is therefore a critical area for further research.
While existing automatic abstraction techniques provide some relief, they often face limitations when dealing with complex protocol parameters.
To address this, we propose a methodology that leverages domain knowledge to construct simulation relations, enabling more scalable verification.

In this paper, we investigate this issue using three protocols: the Simple Communication Protocol (SCP) \cite{ogata2008simulation}, the Alternating Bit Protocol (ABP) \cite{abp}, and the Sliding Window Protocol (SWP) \cite{tanenbaum, veeraswamy}.
Detailed informal descriptions of these protocols are omitted for brevity; readers are referred to \cite{ogata2008simulation, abp, tanenbaum, veeraswamy} for further details.
These protocols are characterized by different levels of complexity: SCP has the highest degree of abstraction and the smallest number of reachable states, while SWP has the lowest degree of abstraction and the largest number of reachable states.
ABP serves as an intermediate protocol between SCP and SWP in terms of both abstraction and the number of reachable states.
These protocols are selected as a systematic benchmark suite to demonstrate the scalability of our approach.
Direct model checking is feasible for SCP but becomes infeasible for ABP and SWP due to the state space explosion problem as their parameter values increase.
To overcome this limitation, we propose a technique that combines model checking with simulation relations.
A simulation relation ensures that if invariant properties hold on an abstract state machine, the corresponding invariant properties are preserved on the more concrete machine.
In this paper, we focus on invariant properties.
Invariant properties are state predicates that hold true in each initial state and remain true throughout any actions (or state transitions) of systems or protocols, remaining unchanged regardless of their reachable states (states reachable from an initial state with zero or more state transitions).
We first propose a verification methodology that combines model checking with simulation relations, enabling the formal verification of protocols that are infeasible for direct model checking due to state space explosion.
We then extend this approach by introducing a methodology for composing simulation relations, which allows for the formal verification of even larger and more complex models through hierarchical abstraction.
Based on these verification methodologies, the main contributions of this study are as follows:
\begin{itemize}
	\item A methodology that combines model checking with simulation relations to enable the verification of systems or protocols where direct model checking fails due to the state space explosion problem.
	\item A methodology for the composition of simulation relations, which further extends verification capabilities to larger models by linking multiple layers of abstraction.
\end{itemize}

All formal specifications and the Maude commands used for verification are available online at \url{https://zenodo.org/records/19734839}.
The primary objective of this study is to determine whether our proposed method can effectively mitigate the state space explosion problem.
To evaluate this, we address the following two research questions:

\begin{itemize}
	\item \textbf{RQ1:} Can the problem of being unable to verify that ABP satisfies its invariant property due to state space explosion be resolved by combining model checking with the simulation relation between ABP and SCP?
	\item \textbf{RQ2:} Can the problem of being unable to verify that SWP satisfies its invariant property due to state space explosion be resolved through the composition of simulation relations, specifically the simulation relations from SWP to ABP and from ABP to SCP?
\end{itemize}

By answering these questions, we show that our technique enables the formal verification of complex protocols with large reachable state spaces that are otherwise beyond the reach of direct model checking.
The remaining part of the paper is organized as follows.
Section~\ref{sec:preliminaries} provides the foundational concepts required for this study, including state machines, simulation relations, and the Maude specification language.
We use the Bare Communication Protocol (BCP) \cite{ogata2008simulation} as an example to demonstrate the formalization of protocols in Maude and the application of its commands.
Section~\ref{sec:proposed_verification_methodologies} presents our two proposed theoretical frameworks: a methodology that combines model checking with simulation relations and a methodology for the composition of simulation relations.
Section~\ref{sec:case_study_1} verifies the invariant properties of ABP by applying the methodology that combines model checking with simulation relations.
Section~\ref{sec:case_study_2} verifies the invariant properties of SWP by employing the composition of simulation relations to establish a hierarchical abstraction.
Section~\ref{sec:related_work} mentions some existing related work.
Section~\ref{sec:conclusion} concludes the paper including some future directions.

\section{Preliminaries}\label{sec:preliminaries}
\subsection{State machines}
\begin{definition}[State Machine]
	A state machine $M$ is $\langle S_{M}, I_{M}, T_{M} \rangle$ such that
	\begin{itemize}
		\item $S_{M}$ is a set of states.
		\item $I_{M} \subseteq S_{M}$ is the set of initial states.
		\item $T_{M} \subseteq S_{M} \times S_{M}$ is a binary relation over states. $(s, s') \in T_{M}$ may be written as $s \leadsto_{M} s'$, called a state transition, where $s'$ is called a successor state of $s$.
	\end{itemize}
\end{definition}
For simplicity of notation, the subscript $_{M}$ is omitted if it is clear from the context that a symbol refers to the state machine $M$.
$\leadsto^*_{M}$ is the reflexive transitive closure of $\leadsto_{M}$.

\begin{definition}[Reachable States]
	\label{reachable_states}
	The set $R_{M}$ of reachable states with respect to (wrt) $M$ is defined as the smallest set satisfying the following conditions:
	\begin{enumerate}[label=(\roman*)]
		\item $I_{M} \subseteq R_{M}$
		\item If $s \in R_{M}$ and $s \leadsto_{M} s'$, then $s' \in R_{M}$
	\end{enumerate}
\end{definition}

\begin{definition}[Invariant Property]
	Let $M$ be a state machine.
	A state predicate $p$ is an invariant property wrt $M$ if $p(s)$ is true for all $s \in  R_{M}$, i.e. $(\forall s \in R_{M})\centerdot p(s)$.
\end{definition}
That a state machine satisfies an invariant property is equivalent to saying that a state predicate is invariant wrt a state machine.
A state predicate $p$ can be interpreted as a set $P_{M}$ of states such that $(\forall s \in P_{M})\centerdot p(s)$ and $(\forall s \notin P_{M})\centerdot \neg p(s)$.

\subsection{Simulations}
A simulation denotes a relation where one program faithfully executes the computational steps of another, maintaining the same logical sequence despite potential differences \cite{DBLP:conf/ijcai/Milner71}.
The paper \cite{ogata2008simulation} defines the simulation relation of one state machine to another state machine.
A type of state machine called Observable Transition Systems (OTS) is used for definitions;
however, in this paper we use the state machines defined in the previous subsection, rather than OTS.

\begin{definition}[Simulation Relation]
	\label{definition_simulation}
	Let $M$ and $M_A$ be state machines. A relation $r : R_M \times R_{M_A} \to \mathrm{Bool}$ is called a simulation relation from $M$ to $M_A$ if it satisfies the following conditions:
	\begin{enumerate}[label=(\roman*)]
		\item For each $s \in I_{M}$, there exists $s_A \in I_{M_A}$ such that $r(s, s_A)$.
		\item For each $s, s ' \in R_M$ and $s_A \in R_{M_A}$ such that $r(s, s_A)$ and $s \leadsto_{M} s'$, there exists $s_A ' \in R_{M_A}$ such that $r(s ', s_A ')$ and $s_A \leadsto^*_{M_A} s'_A$.
	\end{enumerate}
	Note that if $s_A \in R_{M_A}$ and $s_A \leadsto^*_{M_A} s'_A$,
	then $s_A ' \in R_{M_A}$.
\end{definition}

Fig \ref{fig_simulation} shows the diagrams corresponding to the two conditions in Definition \ref{definition_simulation}.
When there exists a simulation relation $r$ from $M$ to $M_A$,
we may say that $M$ is simulated by $M_A$, or $M_A$ simulates $M$.
Definition \ref{definition_simulation} does not say explicitly that for each $s \in R_M$,
there exists $s_A \in R_{M_A}$ such that $r(s, s_A)$,
and then we need to have the following proposition:
\begin{proposition}
	\label{proposition1}
	Let $M$ and $M_A$ be state machines.
	Suppose that $M$ is simulated by $M_A$, that is, there exists a simulation relation from $M$ to $M_A$.
	Then, for every such simulation relation $r$ from $M$ to $M_A$ and every state $s \in R_M$, there exists a state $s_A \in R_{M_A}$ such that $r(s, s_A)$.
\end{proposition}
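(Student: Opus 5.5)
The plan is to argue by induction along the inductive definition of $R_M$ (Definition~\ref{reachable_states}). Fix an arbitrary simulation relation $r$ from $M$ to $M_A$. Define the set
\[
Q = \{ s \in R_M \mid (\exists s_A \in R_{M_A})\centerdot r(s, s_A) \}.
\]
Since $R_M$ is the \emph{smallest} set satisfying conditions (i) and (ii) of Definition~\ref{reachable_states}, it suffices to show that $Q$ also satisfies both conditions. Then $R_M \subseteq Q$, and hence $Q = R_M$, which is exactly the claim. Equivalently, one can do induction on the number of transitions from an initial state to $s$. I would mention this as the concrete form of the argument.

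\emph{Base case.} Let $s \in I_M$. Then $s \in R_M$ by condition (i) of Definition~\ref{reachable_states}. Condition (i) of Definition~\ref{definition_simulation} gives some $s_A \in I_{M_A}$ with $r(s, s_A)$. Since $I_{M_A} \subseteq R_{M_A}$, we have $s_A \in R_{M_A}$, so $s \in Q$. Note that it matters here that $r$ is typed on $R_M \times R_{M_A}$: the witness must be shown reachable, and this follows directly from the inclusion of the initial states.

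\emph{Inductive step.} Suppose $s \in Q$ and $s \leadsto_M s'$. Then $s' \in R_M$, and there is $s_A \in R_{M_A}$ with $r(s, s_A)$. Condition (ii) of Definition~\ref{definition_simulation} applies to $s, s' \in R_M$ and this $s_A$. It yields $s'_A \in R_{M_A}$ with $r(s', s'_A)$ and $s_A \leadsto^*_{M_A} s'_A$. Hence $s' \in Q$.

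The only point needing care is the reachability of $s'_A$. Definition~\ref{definition_simulation} already places $s'_A$ in $R_{M_A}$, but to be safe I would justify the accompanying note separately. If $s_A \in R_{M_A}$ and $s_A \leadsto^*_{M_A} s'_A$, a small induction on the length of the $\leadsto_{M_A}$-path shows $s'_A \in R_{M_A}$, using condition (ii) of Definition~\ref{reachable_states} for $M_A$ at each step. This ensures the induction hypothesis is maintained in the form required. Beyond this bookkeeping there is no real obstacle; the proof is a routine least-fixed-point induction.
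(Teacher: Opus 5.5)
Your proposal is correct and matches the paper's proof: both use induction over the inductive definition of $R_M$, with condition (i) of Definition~\ref{definition_simulation} (together with $I_{M_A} \subseteq R_{M_A}$) for the base case and condition (ii) for the step. Your separate justification that $s'_A$ is reachable is harmless but not needed, since condition (ii) already gives $s'_A \in R_{M_A}$.
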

\begin{proof}
	By induction on $s \in R_{M}$.
	If $s \in I_{M}$, then there exists $s_A \in R_{M_A}$ such that $r(s, s_A)$ from Definition \ref{definition_simulation} (i).
	If $s \leadsto_{M} s'$, then there exists $s_A \in R_{M_A}$ such that $r(s, s_A)$ from the induction hypothesis.
	Then, from Definition \ref{definition_simulation} (ii), there exists $s'_A \in R_{M_A}$ such that $r(s', s'_A)$ and $s_A \leadsto^*_{M_A} s'_A$.
\end{proof}

\begin{figure}[htbp]
	\centering
	\includegraphics[width=0.7\linewidth]{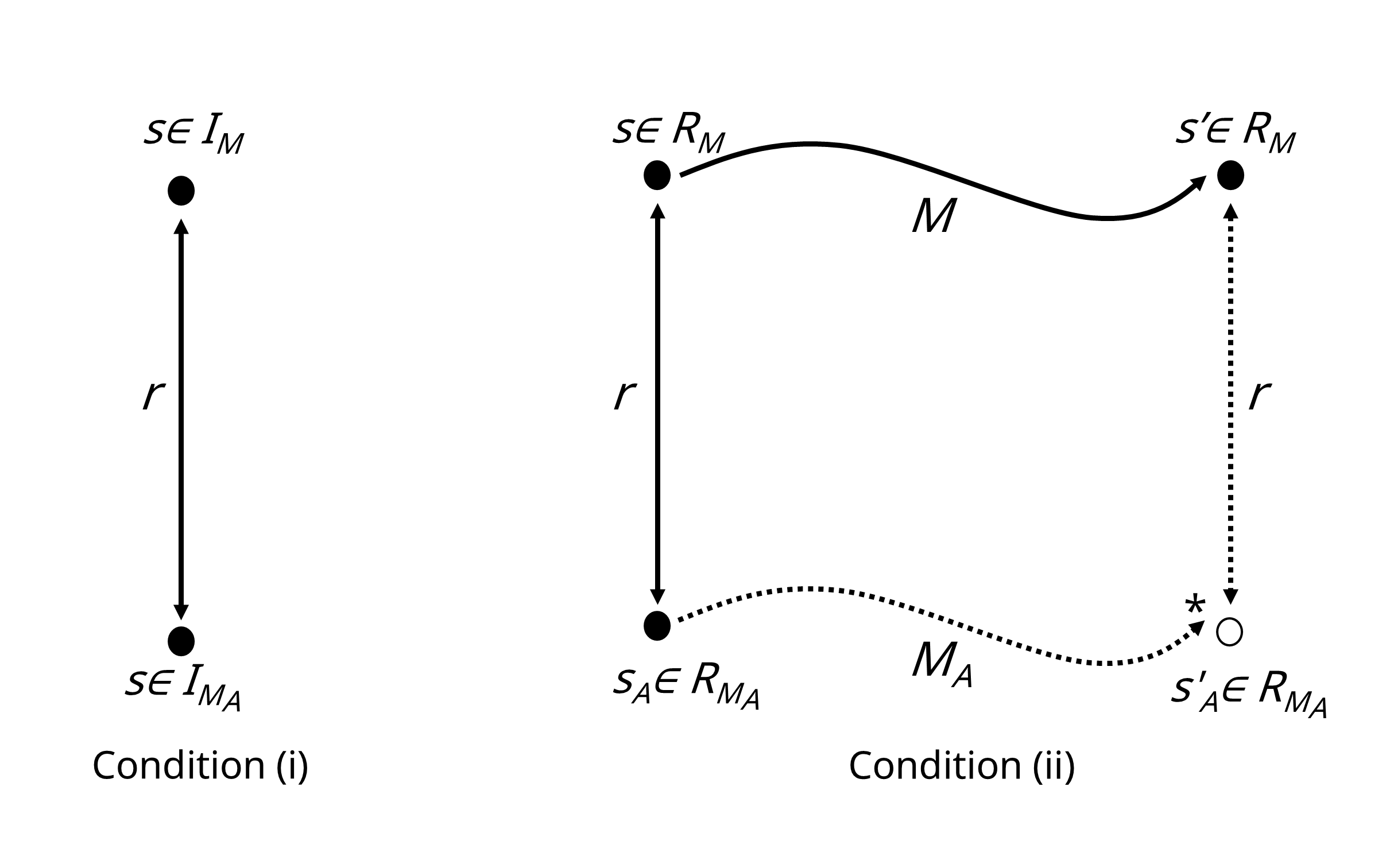}
	\caption{\textbf{A simulation relation $r$ from $M$ to $M_A$.}}
	\label{fig_simulation}
\end{figure}

If $M$ is simulated by $M_A$, a state predicate $p_A$ is an invariant property wrt $M_A$ and $p_A$ is preserved by the simulation relation, then the counterpart state predicate $p$ is an invariant property wrt $M$.

\begin{theorem}
	\label{theorem1}
	Let $M$ and $M_A$ be state machines, and let $p$ and $p_A$ be state predicates.
	Suppose that $M$ is simulated by $M_A$.
	Then, for every such simulation relation $r$ from $M$ to $M_A$, if $p_A(s_A) \Rightarrow p(s)$ holds for any states $s \in R_M$ and $s_A \in R_{M_A}$ satisfying $r(s, s_A)$, the following holds:
	if $p_A$ is an invariant property wrt $M_A$, then $p$ is an invariant property wrt $M$.
\end{theorem}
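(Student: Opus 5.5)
The plan is a direct argument that rests on Proposition~\ref{proposition1}; no new induction is needed. Fix a simulation relation $r$ from $M$ to $M_A$. Assume the preservation hypothesis: $p_A(s_A) \Rightarrow p(s)$ for all $s \in R_M$ and $s_A \in R_{M_A}$ with $r(s, s_A)$. Assume also that $p_A$ is an invariant property wrt $M_A$, i.e.\ $(\forall s_A \in R_{M_A})\centerdot p_A(s_A)$. We must show $(\forall s \in R_M)\centerdot p(s)$.

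Take an arbitrary $s \in R_M$. By Proposition~\ref{proposition1}, applied to this same $r$, there is a state $s_A \in R_{M_A}$ with $r(s, s_A)$. This is the only nontrivial ingredient, and it is exactly why Proposition~\ref{proposition1} was stated: Definition~\ref{definition_simulation} by itself guarantees related abstract states only for initial states and for successors of already-related states. Because $s_A \in R_{M_A}$ and $p_A$ is invariant wrt $M_A$, we get $p_A(s_A)$. The pair $(s, s_A)$ then satisfies the side conditions of the preservation hypothesis ($s \in R_M$, $s_A \in R_{M_A}$, $r(s, s_A)$), so $p(s)$ holds. Since $s$ was arbitrary, $p$ is an invariant property wrt $M$.

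I expect no real obstacle. The one point to check is that the witness $s_A$ supplied by Proposition~\ref{proposition1} lies in $R_{M_A}$ and not merely in $S_{M_A}$. That membership is needed both to use the invariance of $p_A$ and to meet the domain restriction of the preservation hypothesis. Proposition~\ref{proposition1} gives it directly, because $r$ is typed on $R_M \times R_{M_A}$ and the note after Definition~\ref{definition_simulation} says $\leadsto^*_{M_A}$ preserves reachability. If one prefers not to cite the proposition, the same argument can be written as an induction on $s \in R_M$ (Definition~\ref{reachable_states}) carrying the strengthened hypothesis ``there exists $s_A \in R_{M_A}$ with $r(s, s_A)$''. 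That would only repeat the proof of Proposition~\ref{proposition1}.
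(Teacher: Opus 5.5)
Your proposal is correct and matches the paper's proof. The paper likewise takes an arbitrary $s \in R_M$, gets a related $s_A \in R_{M_A}$ from Proposition~\ref{proposition1}, and concludes $p(s)$ from the invariance of $p_A$ together with the preservation hypothesis; you simply spell out the steps that the paper's two-line argument leaves implicit.
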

\begin{proof}
	For an arbitrary $s \in R_{M}$,
	there exists $s_A \in R_{M_A}$ with $r(s, s_A)$ from Proposition \ref{proposition1}.
	Therefore, $p(s)$ holds from the assumption.
\end{proof}

\subsection{Maude}
Maude \cite{maude} is a specification and programming language based on rewriting logic.
In Maude, functional modules (\texttt{fmod}) define the data of the target system, such as sorts (data types), operators, and equations.
In contrast, system modules  (\texttt{mod}) define the system's transition model by allowing rewrite rules in addition to the declarations used in functional modules.
States can be represented in various forms.
In model checking using search command of Maude, we describe a state as a braced associative-commutative collection of name-value pairs, where each name may include parameters. Within the Maude community's terminology, associative-commutative collections are referred to as ``soup'', and name-value pairs are referred to as ``observable components''.
Therefore, a state is expressed as a braced soup of observable components.
The juxtaposition operator serves as the constructor for soups.
Let \texttt{$oc_0, \, oc_1, \, oc_2$} be observable components; then \texttt{$oc_0 \, oc_1 \, oc_2$} constructs the soup containing these three observable components.
Consequently, a state is expressed as \texttt{$\{ oc_0 \, oc_1 \, oc_2 \}$}.
This associative-commutative nature allows Maude to represent complex state structures flexibly.
While Maude's internal engine handles these structures automatically during standard model checking, the multiple substitutions resulting from such matching are explicitly addressed in our simulation-based approach.
This necessitates the systematic case splitting discussed in Section~\ref{sec:proposed_verification_methodologies}.

The state transition relation \textit{T} is defined using rewrite rules.
A standard rewrite rule begins with the keyword \texttt{rl}, followed by a label enclosed in square brackets, a colon, two patterns separated by \texttt{=>}, and concludes with a period. In contrast, a conditional rewrite rule, initiated with \texttt{crl}, includes a condition following the \texttt{if} keyword and ends with a period.
An example of a conditional rewrite rule is
\[
	\texttt{crl [$lb$] : $t$ => $t'$ if $c_1$ $/\backslash$ $\dots$ $/\backslash$ $c_n$ .}
\]
where $lb$ is the rule's label, and each $c_i$ (for $i = 1, \dots, n$) is a condition, often represented as an equation \texttt{$tc_i$ = ${tc_i}'$}.
The negation of this equation can be expressed as \texttt{($tc_i$ != ${tc_i}'$) = true}, with the \texttt{= true} part being optional. If the condition \textit{$c_1$ $/\backslash$ $\dots$ $/\backslash$ $c_n$} is satisfied under a certain substitution $\sigma$, then $\sigma(t)$ can be replaced with $\sigma(t')$.
The substitution $\sigma$ is a function from variables to terms that preserves sorts.
The substitution $\sigma$ can be naturally extended as a function from terms to terms as follows:
\[
	\sigma\bigl(f(t_1,\ldots,t_n)\bigr)
	= f\bigl(\sigma(t_1),\ldots,\sigma(t_n)\bigr)
\]
for any function symbol $f$ of arity $n$ and terms $t_1,\ldots,t_n$.

Maude provides lists as one of its predefined container data types.
The lists are parameterized, allowing for the construction of lists of elements of any data type.
The basic representation of lists uses two free constructors: the constant \texttt{nil} for the empty list and the juxtaposition operator for list construction.
For example, when an element \texttt{E} of the same data type as the elements of list \texttt{L} is added to the front of the list, we denote it as \texttt{E L}.

Maude makes it possible to use fresh constants that represent arbitrary values, thereby enabling symbolic execution.
For example, a fresh constant $x$ of sort \texttt{Nat} represents an arbitrary natural number.

We describe how to specify a state machine in Maude through the Bare Communication Protocol (BCP) \cite{ogata2008simulation} as an example.
To formalize the BCP as a state machine, we use the following observable components.
\begin{itemize}
	\item \texttt{(index: i)} - \texttt{i} is a number. The sender transmits this number to the receiver.
	\item \texttt{(list: l)} - \texttt{l} is a list of numbers. The receiver receives numbers from the sender and stores them in this list.
\end{itemize}

The initial state \texttt{init} is defined as follows:
\begin{lstlisting}
{(index: 0) (list: nil)} .
\end{lstlisting}

In the initial state, each value is as follows:
\begin{itemize}
	\item The \texttt{index} number is set to 0. This indicates that the sender is prepared to send a message with the index number 0.
	\item The \texttt{list}, which the receiver uses to store index numbers received from the sender, is empty (\texttt{nil}). This indicates that the receiver has not yet received any index numbers.
\end{itemize}

The BCP is specified by a single rewrite rule \texttt{send}.
The rewrite rules use the following Maude variables:
\begin{itemize}
	\item \texttt{OCs} denotes a variable of observable components soups.
	\item \texttt{I} denotes a variable of \texttt{index}.
	\item \texttt{L} denotes a variable of \texttt{list}.
\end{itemize}
The rewrite rule \texttt{send} is defined as follows:
\begin{lstlisting}
crl [send] : 
{(index: I) (list: L) OCs} => {(index: s(I)) (list: (I L)) OCs}
if I <= maxIndex .
\end{lstlisting}
The rewrite rule specifies that if the index number, denoted as \texttt{I}, is less than or equal to the predefined \texttt{maxIndex},
the index is incremented by 1 (\texttt{s(I)})
and the index is added to the list (\texttt{(I L)}).
\texttt{maxIndex} is a natural number.
It is used to limit the reachable state space during model checking.

The invariant property candidate of BCP is defined in Maude as follows:
\begin{lstlisting}
eq BCP-inv(BCP) = mk(getIndex(BCP)) = (getIndex(BCP) getList(BCP)) .
\end{lstlisting}
\texttt{BCP-inv} ensures that the protocol's state remains consistent and correct.
Letting \texttt{BCP} denote the protocol's state,
\texttt{BCP-inv} is defined such that \texttt{mk(getIndex(BCP))} is equal to \texttt{(getIndex(BCP) getList(BCP))}.
\texttt{mk} takes a natural number \textit{k} and makes the list $(k \ k-1 \ \ldots \ 0)$.
The functions prefixed with \texttt{get}, including \texttt{getIndex} and \texttt{getList}, serve as accessors to extract the \texttt{index} and \texttt{list} components, respectively, from the state.
We define a constructor \texttt{bcp} for the data type \texttt{BCPState},
which takes two arguments for \texttt{index} and \texttt{list},
to pass the protocol state as an argument to functions evaluating invariant property candidates and simulation relation candidates.
For instance, \texttt{getList(bcp(index, list))} returns \texttt{list}.
For other communication protocols in this paper, we define states in the same manner.

For model checking, Maude provides the \texttt{search} command, which looks for states that are reachable from an initial state through state transitions, aiming to find states that meet a certain condition.
To verify that BCP satisfies its invariant property \texttt{BCP-inv}, the following command is executed:
\begin{lstlisting}
search[1] in BCP : init =>* {(index: I) (list: L) OCs} 
such that not BCP-inv(bcp(I, L)) .
\end{lstlisting}
Maude returns \texttt{No solution}.
This means that the invariant property holds throughout all reachable states of the protocol.

The \texttt{reduce} (or \texttt{red}) command is used for term evaluation, simplifying a specified term using the function defined in a module.
We will use the function \texttt{mk}, which was used within the function known as \texttt{BCP-inv}, as a concrete example.
We define the function  \texttt{mk} in the module \texttt{N-LIST}, which implements lists of natural numbers, as follows:
\begin{lstlisting}
eq mk(0) = 0 nil .
eq mk(s(X)) = s(X) mk(X) .
\end{lstlisting}
\texttt{s} represents the successor function of natural numbers.
We use Maude to reduce \texttt{mk(3)} by using the following \texttt{red} command:
\begin{lstlisting}
red in N-LIST : mk(3) .
\end{lstlisting}
Maude returns \texttt{3 2 1 0}.
This means that \texttt{mk(3)} is reduced to \texttt{3 2 1 0}.

Finally, the \texttt{srewrite} (or \texttt{srew}) command identifies successor states by applying a specific rewrite rule to a given state.
We use following the \texttt{srew} command, as a concrete example.
\begin{lstlisting}
srew in BCP : init using send .
\end{lstlisting}
This \texttt{srew} command applies the rewrite rule \texttt{send} to the initial state \texttt{init}.
The \texttt{srew} command identifies \lstinline|{index: 1 list: 0}| as the successor state.
This means that the initial state \texttt{init} moves to the state \lstinline|{index: 1 list: 0}| by an application of the rewrite rule \texttt{send} to the initial state.

\section{Proposed verification methodologies}\label{sec:proposed_verification_methodologies}

\subsection{Combining model checking with simulation relations}\label{sec:combining_model_checking_with_simulation_relations}
The methodology combining model checking with simulation relations employs model checking to demonstrate that a state machine $M$ is simulated by a more abstract state machine $M_A$.
It further shows that, if $M$ is simulated by $M_A$ and a state predicate $p_A$ is an invariant property wrt $M_A$, then the corresponding state predicate $p$ is an invariant property wrt $M$.
To prove that $p$ is an invariant property wrt $M$ using the methodology for the state machines $M$ and $M_A$, and their respective state predicates $p$ and $p_A$, we follow three steps:
\begin{enumerate}
	\item Conduct model checking to verify that $p_A$ is an invariant property wrt $M_A$.
	\item Conjecture a simulation relation candidate $r$ from $M$ to $M_A$, and prove that $p$ can be deduced from $p_A$ assuming the simulation relation candidate $r$ using static case splitting on data structures and reduction.
	\item Demonstrate that $M$ is simulated by $M_A$ using transition-based case splitting over rewrite rules, model checking, and reduction.
\end{enumerate}
Fig \ref{overview} provides a comprehensive overview of the proposed methodology.
The figure illustrates the process of verifying an invariant property $p$ of $M$ by leveraging $M_A$ and a simulation relation $r$.
As shown in the workflow, the verification is structured into three steps.
By following these steps, we can formally verify that $p$ is an invariant property wrt $M$ even though it is infeasible to directly model check that $p$ is an invariant property wrt $M$.
This is useful when $M$ has a large reachable state space that leads to state space explosion.
$M_A$ must have a reachable state space that does not lead to state space explosion.
If we successfully carry out steps 1 through 3, then we can conclude that $p$ is an invariant property wrt $M$ from Theorem \ref{theorem1}.
In the case studies of this paper, $M_A$ is manually constructed and is not automatically generated from $M$.

\begin{figure}[htbp]
	\centering
	\includegraphics[width=\linewidth]{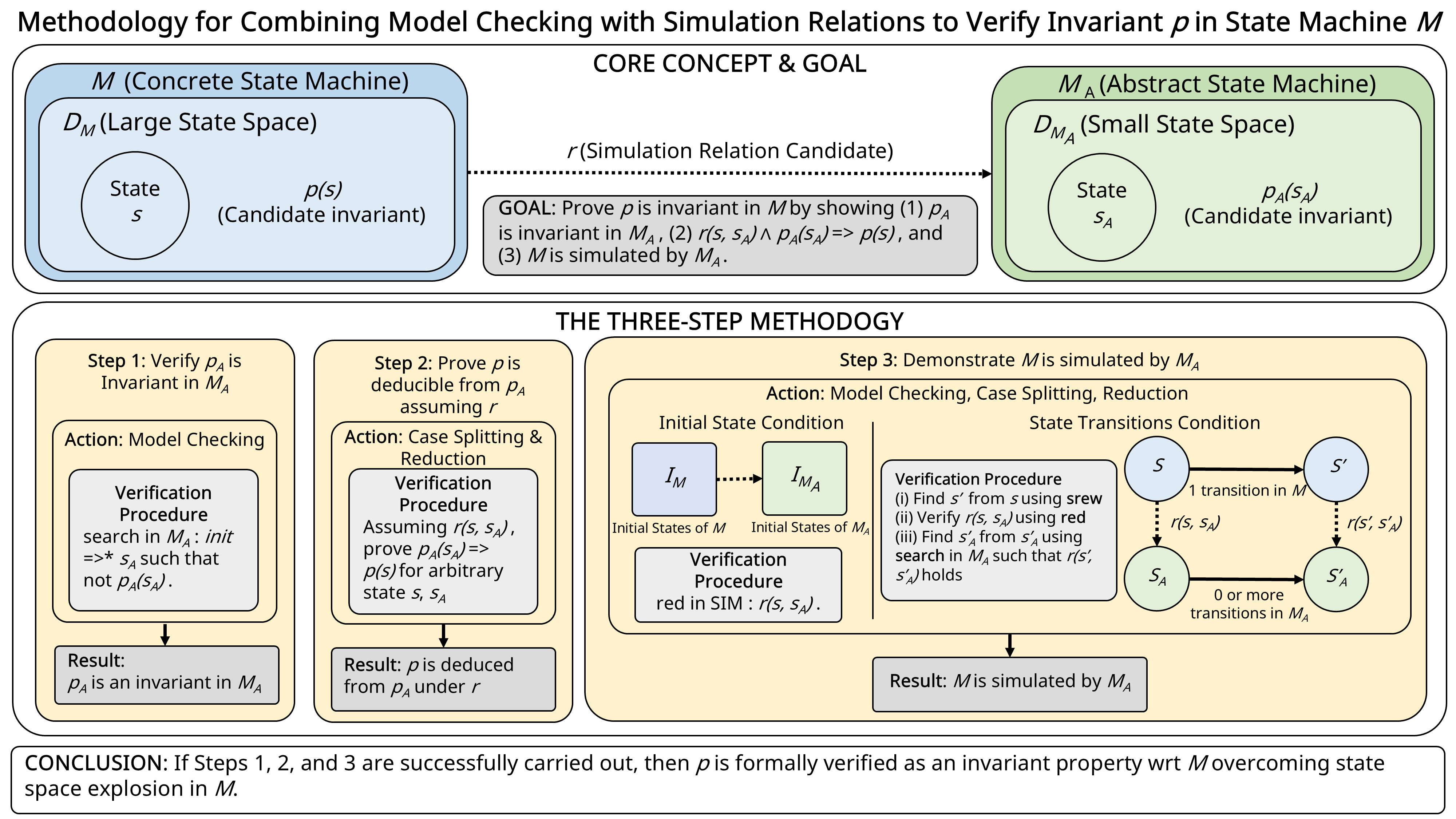}
	\caption{\textbf{Overview of the proposed methodology for combining model checking with simulation relations to verify invariant properties.}}
	\label{overview}
\end{figure}

We design the technique in step 3 to satisfy Definition \ref{definition_simulation}.
We define the simulation relation candidate $r$ with an equation in Maude.
$r$ takes two states of $M$ and $M_{A}$ respectively as arguments and returns a Boolean value.
It returns true if $r$ holds between the two states, and false otherwise.

To demonstrate that Definition \ref{definition_simulation} (i) holds, we verify that for each $s \in I_M$, there exists $s_A \in I_{M_A}$ such that $r(s, s_A)$.
To do this, we use the \texttt{red} command of Maude.
Specifically, human users are supposed to construct an initial state $s_A$ of $M_A$ or choose $s_A$ from $I_{A}$ for each $s$ of $I$ and use the following command:
\begin{lstlisting}
red in SIM : r((*@$s$@*), (*@$s_A$@*)).
\end{lstlisting}
where \texttt{SIM} is the module in which \texttt{r} is defined and then the formal specifications of $M$ and $M_{A}$ in Maude are available.
Note that it is straightforward to construct an initial state $s_A$ of $M_{A}$ or choose $s_A$ from $I_{A}$ for each $s$ of $I$.
This command returns a Boolean value.
For each $s \in I_M$, if this command returns true,
we can conclude that Definition \ref{definition_simulation} (i) is satisfied.
Assuming that $M_A$ has a reachable state space that does not lead to state space explosion, $I_{A}$ consists of a small number of initial states.
It is usually very small.
For example, all case studies in this paper involve state machines with exactly one initial state.

Next, we describe the technique for demonstrating that Definition \ref{definition_simulation} (ii) holds.
This technique requires transition-based case splitting for each state transition of $M$.
Ordinary finite-state model checking usually uses specific concrete values.
However, since we aim to demonstrate that the simulation relation is preserved across all state transitions of $M$ with a large reachable state space, where state space explosion occurs,
we use fresh constants in addition to concrete values.

We verify that for each state transition of $M$,
for arbitrary states $s, s'$ of $M$, and an arbitrary state $s_A$ of $M_{A}$ such that $r(s, s_A)$ and $s \leadsto_{M} s'$,
there exists a state $s_A '$ of $M_{A}$ such that $r(s ', s_A ')$ and $s_A \leadsto^*_{M_A} s'_A$.
In Maude, state transitions are defined by rewrite rules.
As described in the Section~\ref{sec:preliminaries}, rewrite rules are defined in two ways: as a rewrite rule (\texttt{rl}) and a conditional rewrite rule (\texttt{crl}).
\begin{itemize}
	\item \texttt{rl [$lb$] : $t$ => $t'$ }
	\item \texttt{crl [$lb$] : $t$ => $t'$ if $\ldots$ $/\backslash$ $c_i$ $/\backslash$ $\ldots$}
\end{itemize}
If the \textit{t'} in \texttt{rl} or \texttt{crl} contains an \texttt{if\_then\_else\_fi} expression, a conditional branch occurs.
Depending on the condition, different state transitions may occur.
We need to do case splitting for (conditional) rewrite rules of $M$.
In the case studies of this paper, we use case splitting based on the conditions of the conditional rewrite rules and the conditions specified in the \texttt{if\_then\_else\_fi} expression, ensuring that all possible state transitions of $M$ can be taken into account.
In addition to these cases, other types of case splitting may also be required.
For example, when using associative commutative pattern matching, multiple substitutions may be obtained, requiring further case splitting based on these substitutions.
An arbitrary state before a state transition, caused by a single application of a rewrite rule, corresponds to $\sigma(t)$,
and an arbitrary state after the state transition, caused by the same application of the same rewrite rule, corresponds to $\sigma(t')$, where $\sigma(t), \sigma(t') \in S_M$ and $\sigma(t) \leadsto_{M} \sigma(t')$.
Since reachability is, in general, undecidable, it is not feasible to restrict the proof directly to the set of reachable states $R_M$.
Therefore, we introduce a superset $D_M$ such that $R_M \subseteq D_M \subseteq S_M$.
Correspondingly, we introduce a superset $D_{M_A}$ such that $R_{M_A} \subseteq D_{M_A} \subseteq S_{M_A}$ to serve as the counterpart.
We carry out the proof over $D_M$ and $D_{M_A}$.
To ensure the soundness of this approach,
we establish that verifying the state transition condition over these supersets ($D_M$ and $D_{M_A}$) serves as a sufficient condition for satisfying Definition \ref{definition_simulation} (ii) over the original reachable states ($R_M$ and $R_{M_A}$).
This theoretical foundation is formally provided by the following lemma:
\begin{lemma}
	\label{lemma_model_checking_for_simulation}
	The logical formula
	\begin{equation}
		\label{eq:first_formula}
		\begin{aligned}
			 & (\forall s \in D_M)(\forall s' \in D_M)(\forall s_A \in D_{M_A})(\exists s_A' \in D_{M_A}) \\
			 & \quad (r(s, s_A) \wedge (s, s') \in T_M)
			\;\Rightarrow\;
			(r(s', s'_A) \wedge s_A \leadsto^*_{M_A} s'_A)
		\end{aligned}
	\end{equation}
	is a sufficient condition for the logical formula
	\begin{equation}
		\label{eq:second_formula}
		\begin{aligned}
			 & (\forall s \in R_M)(\forall s' \in R_M)(\forall s_A \in R_{M_A})(\exists s_A' \in R_{M_A}) \\
			 & \quad (r(s, s_A) \wedge (s, s') \in T_M)
			\;\Rightarrow\;
			(r(s', s'_A) \wedge s_A \leadsto^*_{M_A} s'_A) .
		\end{aligned}
	\end{equation}
\end{lemma}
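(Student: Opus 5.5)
The plan is to derive formula~(\ref{eq:second_formula}) from formula~(\ref{eq:first_formula}) by specialization. The universally quantified variables are restricted from the supersets $D_M$ and $D_{M_A}$ down to $R_M$ and $R_{M_A}$. The existential witness supplied by (\ref{eq:first_formula}) is then shown to lie in $R_{M_A}$ automatically. One preliminary remark is needed. Definition~\ref{definition_simulation} types $r$ on $R_M \times R_{M_A}$, but (\ref{eq:first_formula}) evaluates $r$ on $D_M \times D_{M_A}$. I will therefore read $r$ as a Boolean function on the larger domain, namely the Maude equation, whose restriction to $R_M \times R_{M_A}$ is the candidate simulation relation. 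With this reading both formulas are meaningful and refer to the same $r$.

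Assume (\ref{eq:first_formula}). Fix arbitrary $s, s' \in R_M$ and $s_A \in R_{M_A}$, and suppose $r(s, s_A)$ and $(s, s') \in T_M$. Since $R_M \subseteq D_M$ and $R_{M_A} \subseteq D_{M_A}$, we have $s, s' \in D_M$ and $s_A \in D_{M_A}$. Instantiating (\ref{eq:first_formula}) at these points gives some $s'_A \in D_{M_A}$ such that the implication holds. Its hypothesis is satisfied, so we obtain $r(s', s'_A)$ and $s_A \leadsto^*_{M_A} s'_A$.

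It remains to show $s'_A \in R_{M_A}$, so that $s'_A$ is an admissible witness for the existential quantifier in (\ref{eq:second_formula}). This follows from the remark after Definition~\ref{definition_simulation}: reachable states are closed under $\leadsto^*_{M_A}$. In detail, one inducts on the length of the path $s_A \leadsto_{M_A} \cdots \leadsto_{M_A} s'_A$ using clause (ii) of Definition~\ref{reachable_states}, with the zero-length case trivial because $\leadsto^*$ is reflexive. Hence $s'_A \in R_{M_A}$ witnesses the conclusion.

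One subtlety concerns the case where the antecedent of the implication fails. There the existential in (\ref{eq:second_formula}) must still be met by some element of $R_{M_A}$. Any element works, for example $s_A$ itself, since the implication is then vacuously true; $R_{M_A}$ is nonempty because it contains $s_A$.

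I expect no step to be a real obstacle. The only point needing care is this witness-membership step, together with correctly handling the quantifier order $\forall\forall\forall\exists$ around an implication. Everything else is direct monotonicity of the universal quantifiers under $R \subseteq D$.
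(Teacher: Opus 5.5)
Your proof is correct and takes essentially the same route as the paper: restrict the universal quantifiers from $D_M, D_{M_A}$ to $R_M, R_{M_A}$, split on whether the antecedent holds, and use closure of $R_{M_A}$ under $\leadsto^*_{M_A}$ (Definition~\ref{reachable_states}) to put the witness $s'_A$ in $R_{M_A}$. You are also slightly more careful than the paper in two places it leaves implicit: you supply an explicit witness ($s_A$ itself) for the existential in the vacuous case, and you note that $r$ must be read on the larger domain $D_M \times D_{M_A}$.
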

\begin{proof}
	Assume that the logical formula \eqref{eq:first_formula} holds.
	Since $R_M \subseteq D_M$ and $R_{M_A} \subseteq D_{M_A}$, the following holds.
	\begin{equation}
		\tag{3}
		\label{eq:third_formula}
		\begin{aligned}
			 & (\forall s \in R_M)(\forall s' \in R_M)(\forall s_A \in R_{M_A}) (\exists s_A' \in D_{M_A}) \\
			 & \quad (r(s, s_A) \wedge (s, s') \in T_M)
			\;\Rightarrow\;
			(r(s', s'_A) \wedge s_A \leadsto^*_{M_A} s'_A)
		\end{aligned}
	\end{equation}
	We show that the logical formula \eqref{eq:third_formula} implies the logical formula \eqref{eq:second_formula}.
	\begin{enumerate}[label=(\roman*)]
		\item
		      If $(r(s, s_A) \wedge (s, s') \in T_M )$ is false,
		      the implication $(r(s, s_A) \wedge (s, s') \in T_M) \Rightarrow (r(s', s'_A) \wedge s_A \leadsto^*_{M_A} s'_A)$ is satisfied.
		      The logical formula \eqref{eq:second_formula} holds.
		\item
		      If $(r(s, s_A) \wedge (s, s') \in T_M)$ is true,
		      there exists \(s'_A \in D_{M_A}\) such that $r(s', s'_A) \wedge s_A \leadsto^*_{M_A} s'_A$.
		      Since \(s_A \in R_{M_A}\) and \(s_A \leadsto^*_{M_A} s'_A\), it follows that \(s'_A \in R_{M_A}\) from Definition \ref{reachable_states}.
		      The logical formula \eqref{eq:second_formula} holds.
	\end{enumerate}
	Hence, in both cases, the logical formula \eqref{eq:second_formula} is satisfied whenever the logical formula \eqref{eq:first_formula} holds.
	Therefore, The logical formula \eqref{eq:first_formula} is a sufficient condition for the logical formula \eqref{eq:second_formula}.
\end{proof}
This lemma allows us to reason soundly about reachable states by performing verification over their respective supersets.
To implement this verification in Maude, we execute a three-command procedure for each rewrite rule of $M$.

First, we find the states resulting from the state transition caused by the application of the rewrite rule using the \texttt{srew} command of Maude.
Specifically, we use the following command:
\begin{lstlisting}
srew in (*@$M$@*) : (*@$\sigma(t)$@*) using (*@$lb$@*) .
\end{lstlisting}
where $M$ is the name of the module defining the state machine $M$, and $lb$ is the label of the rewrite rule.
This \texttt{srew} command finds $\sigma(t')$ that is reachable from $\sigma(t)$ in $M$ by applying the rewrite rule \textit{lb}.
$\sigma(t)$ is the term that represents the most general state for each case in the case splitting.
We construct $\sigma(t)$ using fresh constants.
By performing the exhaustive case splitting, we ensure that the collection of these terms $\sigma(t)$ effectively covers the entire state space, representing the most general state possible for every transition branch.
This comprehensive coverage is essential to guarantee that no potential state transition is overlooked during the verification process.
For example, if $t$ has associativity or commutativity, multiple substitutions can be obtained.
It can be observed that when multiple substitutions are obtained, further case splitting is required.
The point at which the \texttt{srew} command is applied is shown as command (i) in Fig \ref{3steps}.
This command helps to find $\sigma(t')$ required to demonstrate Definition \ref{definition_simulation} (ii).

\begin{figure}[htbp]
	\centering
	\includegraphics[width=\linewidth]{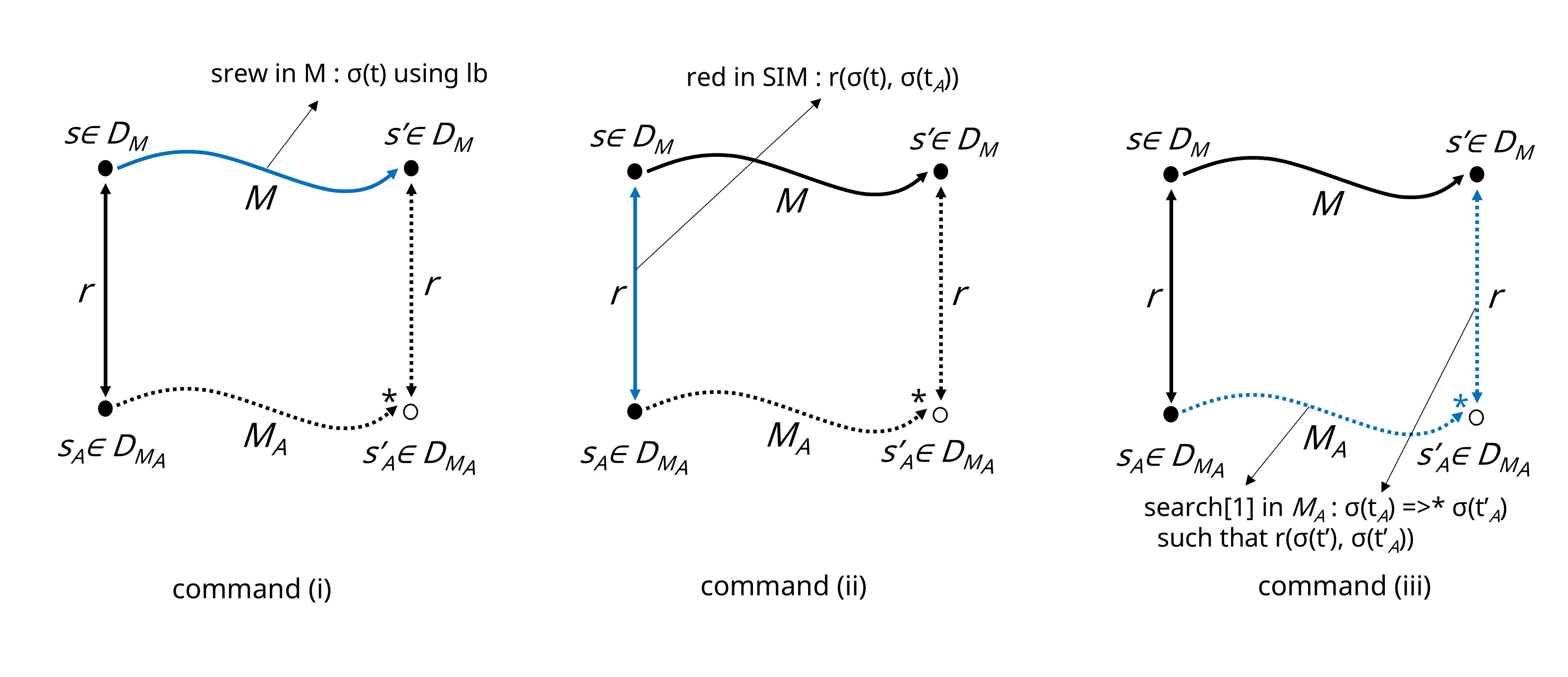}
	\caption{\textbf{Verification of simulation relation for each state transition.}}
	\label{3steps}
\end{figure}

Next, we verify that $s_A \in D_{M_A}$ corresponding to $s \in D_M$ has $r$ using the \texttt{red} command of Maude.
Specifically, human users are supposed to construct a state $s_A$ of $M_A$ for each $s$ of $M$ and use the following command:
\begin{lstlisting}
red in SIM : r(*@($\sigma(t), \sigma(t_{A})$)@*) .
\end{lstlisting}
where \texttt{SIM} is the name of the module defining the simulation relation candidate \texttt{r} with an equation.
There are separate modules defining the states of $M$ and $M_A$, both of which are imported by the module \texttt{SIM}.
This \texttt{red} command returns a Boolean value.
If the command returns true,
there is $r$ between $\sigma(t)$ and $\sigma(t_{A})$.
As with $\sigma(t)$, we use fresh constants to represent $\sigma(t_{A})$.
We construct a candidate $\sigma(t_{A})$ that is expected to hold the $r$ with $\sigma(t)$, and use the command to verify that $r$ holds between $\sigma(t)$ and $\sigma(t_{A})$.
The point at which the \texttt{reduce} command is applied is shown as command (ii) in Fig \ref{3steps}.

Finally, we verify that zero or more state transitions make it possible to move $s_A$ to $s'_A$ such that the \texttt{search} command is used to confirm that $r(s', s'_A)$ holds.
Specifically, we use the following command:
\begin{lstlisting}
search[1] in (*@$M_A$@*) : (*@$\sigma(t_{A})$@*) =>* (*@$\sigma(t_{A}')$@*) such that r(*@($\sigma(t'), \sigma(t_{A}')$)@*) .
\end{lstlisting}
where $M_A$ is the name of the module defining the state machine $M_A$.
The module $M_A$ imports the module \texttt{SIM}.
Therefore, $r$ can be used in the module $M_A$.
If this command returns a solution,
it indicates that there exists $s'_A \in D_{M_A}$ such that $r(s', s'_A)$ and $s_A \leadsto^*_{M_A} s'_A$.
The point at which the \texttt{search} command is applied is shown as command (iii) in Fig \ref{3steps}.

By the execution of our three-command procedure for each state transition caused by the rewrite rules of $M$,
we can verify that the following condition holds:
for each state transition of $M$,
for arbitrary states $s, s'$ of $M$, and an arbitrary state $s_A$ of $M_{A}$ such that $r(s, s_A)$ and $s \leadsto_{M} s'$,
there exists a state $s_A '$ of $M_{A}$ such that $r(s ', s_A ')$ and $s_A \leadsto^*_{M_A} s'_A$.
This can be expressed in a logical formula as follows:
\[
	\begin{aligned}
		 & (\forall s \in D_M)(\forall s' \in D_M)(\forall s_A \in D_{M_A}) (\exists s_A' \in D_{M_A}) \\
		 & \quad (r(s, s_A) \wedge (s, s') \in T_M)
		\Rightarrow (r(s', s'_A) \wedge s_A \leadsto^*_{M_A} s'_A)
	\end{aligned}
\]
By combining this verification result with Lemma \ref{lemma_model_checking_for_simulation},
we can formally conclude that Definition \ref{definition_simulation} (ii) is satisfied.
Specifically, since Lemma \ref{lemma_model_checking_for_simulation} ensures that the satisfaction of formula (\ref{eq:first_formula}) over the supersets is a sufficient condition for formula (\ref{eq:second_formula}),
the successful execution of our three-command procedure confirms that the state machine $M$ is simulated by $M_A$.

We have described the technique in step 3 for verifying whether both conditions of Definition \ref{definition_simulation} are satisfied.
If both conditions of Definition \ref{definition_simulation} are satisfied, then $M$ is simulated by $M_A$.

While this methodology provides a robust framework for verification using a single simulation relation, it can be further extended to address the challenges of verifying even larger and more complex systems.
In the following subsection, we introduce a methodology for the composition of simulation relations, which enables hierarchical verification by linking multiple layers of abstraction to further mitigate the state space explosion problem.

\subsection{Composition of simulation relations from state machines to state machines}\label{sec:composition}
To address cases where a single layer of abstraction is insufficient, we further extend Section~\ref{sec:combining_model_checking_with_simulation_relations} by introducing a methodology for the composition of simulation relations.
Let $M, M_A,$ and $M_B$ be state machines with increasing levels of abstraction and smaller reachable state spaces.
Let $p, p_A,$ and $p_B$ be their respective state predicates,
and let $r_1$ and $r_2$ be the simulation relation candidates from $M$ to $M_A$ and from $M_A$ to $M_B$, respectively.
Here, we assume that the reachable state space of $M_B$ is of a size that allows for model checking that $p_B$ is an invariant property wrt $M_B$,
and that both $M$ and $M_A$ have large reachable state spaces, leading to state space explosion,
making it infeasible to directly conduct model checking that $p$ is an invariant property wrt $M$ and that $p_A$ is an invariant property wrt $M_A$.
To prove that $p$ is an invariant property wrt $M$, we follow five steps:
\begin{enumerate}
	\item Conduct model checking to verify that $p_B$ is an invariant property wrt $M_B$.
	\item Conjecture a simulation relation candidate \texttt{r2} from $M_A$ to $M_B$, and prove that $p_A$ can be deduced from $p_B$ assuming the simulation relation candidate $r_2$ using static case splitting on data structures and reduction.
	\item Demonstrate that $M_A$ is simulated by $M_B$ using transition-based case splitting over rewrite rules, model checking, and reduction.
	\item Conjecture a simulation relation candidate $r$ from $M$ to $M_A$, and prove that $p$ can be deduced from $p_A$ assuming the simulation relation candidate $r_1$ using static case splitting on data structures and reduction.
	\item Demonstrate that $M$ is simulated by $M_A$ using transition-based case splitting over rewrite rules, model checking, and reduction.
\end{enumerate}
By following the steps, we can formally verify that $p$ is an invariant property wrt $M$ even though it is infeasible to directly conduct model checking that $p$ is an invariant property wrt $M$ and that $p_A$ is an invariant property wrt $M_A$.
This is useful when $M$ and $M_A$ have a large reachable state space that leads to state space explosion.
From steps 1, 2, and 3, it follows that $p_A$ is an invariant property wrt $M_A$.
This approach is consistent with the methodology described in the Section~\ref{sec:combining_model_checking_with_simulation_relations}.
From the fact that $p_A$ is an invariant property wrt $M_A$, and from steps 4 and 5, it follows that $p$ is an invariant property wrt $M$.

\section{Case study I: The Alternating Bit Protocol (ABP)}\label{sec:case_study_1}
This section demonstrates the application of our methodology to verify that the Alternating Bit Protocol (ABP) satisfies its desired invariant property.
To address \textbf{RQ1}, we establish a simulation relation from ABP to the more abstract Simple Communication Protocol (SCP) and leverage model checking on SCP to overcome the state space explosion problem.
We first provide the formal specifications of both protocols before detailing the three-step verification process.

\subsection{Formal specification of the Alternating Bit Protocol (ABP)}\label{sec:abp}
The Alternating Bit Protocol (ABP) is a communication protocol designed to provide reliable communication between a sender and a receiver, even when the underlying channels duplicate or drop messages.
The protocol architecture consists of a sender process, a receiver process, and two unreliable first-in, first-out (FIFO) channels: \texttt{queue1} for data messages and \texttt{queue2} for acknowledgments.
To maintain the communication state, the sender stores a Boolean variable \texttt{bit1} and a natural number \texttt{index}, while the receiver maintains a Boolean variable \texttt{bit2} and a natural number list \texttt{list}.
The protocol logic proceeds through the following four corresponding steps:
\begin{enumerate}
	\item The sender transmits a data message containing its current \texttt{bit1} and \texttt{index} to the receiver via \texttt{queue1}.
	\item Upon receiving a message from \texttt{queue1}, the receiver compares the message bit with its expected \texttt{bit2}. If they match, it appends the \texttt{index} to the \texttt{list}, toggles its \texttt{bit2}, and prepares for the acknowledgment; otherwise, it discards the message as a duplicate.
	\item The receiver transmits an acknowledgment message containing its current \texttt{bit2} back to the sender via \texttt{queue2}.
	\item Upon receiving an acknowledgment from \texttt{queue2}, the sender compares the acknowledgment bit with its \texttt{bit1}. If the bit confirms successful delivery, the sender toggles its \texttt{bit1} and prepares the next \texttt{index} for transmission; otherwise, it continues retransmitting the current message.
\end{enumerate}
This entire flow, built around the alternating bit, ensures reliable communication even if the underlying channels duplicate or drop messages and their acknowledgment messages.

To formalize the ABP as a state machine, we use the following observable components.
\begin{itemize}
	\item \texttt{(index: i)} - \texttt{i} is a number. The sender transmits this number to the receiver.
	\item \texttt{(list: l)} - \texttt{l} is a list of numbers. The receiver receives numbers from the sender and stores them in this list.
	\item \texttt{(queue1: q1)} - \texttt{q1} is a queue for cell1 messages, which consist of a pair containing a Boolean value and an index number. The message, encapsulated by the data type \texttt{PCell}, is dispatched by the sender to the receiver. If \texttt{queue1} is empty, \texttt{q1} is set to \texttt{nil}.
	\item \texttt{(queue2: q2)} - \texttt{q2} is a queue for cell2 messages, which contains a Boolean value. The message, encapsulated by the data type \texttt{BCell}, is transmitted from the receiver back to the sender. If \texttt{queue2} is empty, \texttt{q2} is set to \texttt{nil}.
	\item \texttt{(bit1: b1)} - \texttt{b1} is a Boolean value that the sender maintains.
	\item \texttt{(bit2: b2)} - \texttt{b2} is a Boolean value that the receiver maintains.
\end{itemize}

The initial state is defined as follows:
\begin{lstlisting}
{(index: 0) (list: nil) (queue1: nil) 
 (queue2: nil) (bit1: false) (bit2: false)} .
\end{lstlisting}

In the initial state, each value is as follows:
\begin{itemize}
	\item The \texttt{index} number is set to 0. This indicates that the sender is prepared to dispatch a cell1 message with the index number 0.
	\item The \texttt{list}, which the receiver uses to store index numbers received from the sender, is empty (\texttt{nil}). This indicates that the receiver has not yet received any index numbers.
	\item The \texttt{queue1} is empty (\texttt{nil}). This indicates that either the sender has not yet sent any cell1 messages, the receiver has already received all cell1 messages sent by the sender, or all cell1 messages have been dropped. In the initial state, this condition refers to the case where no cell1 messages have been sent.
	\item The \texttt{queue2} is empty (\texttt{nil}). This indicates that either the receiver has not yet sent any cell2 messages, the sender has received all cell2 messages from the receiver, or all cell2 messages have been dropped. In the initial state, this condition refers to the case where no cell2 messages have been sent.
	\item \texttt{bit1} is set to false, indicating the initial Boolean state for transmission control from the sender.
	\item \texttt{bit2} is also set to false, indicating the initial Boolean state for acknowledgments from the receiver.
\end{itemize}

The ABP is specified as eight rewrite rules: \texttt{send1}, \texttt{rec1}, \texttt{send2}, \texttt{rec2}, \texttt{drop1}, \texttt{dup1}, \texttt{drop2}, and \texttt{dup2}.
The rewrite rules use the following Maude variables:
\begin{itemize}
	\item \texttt{OCs} denotes a variable of observable components soups.
	\item \texttt{I} denotes a variable of \texttt{index}.
	\item \texttt{L} denotes a variable of \texttt{list}.
	\item \texttt{PQ} denotes a variable of \texttt{queue1}.
	\item \texttt{BQ} denotes a variable of \texttt{queue2}.
	\item \texttt{PC} denotes a variable of an element in \texttt{queue1}.
	\item \texttt{BC} denotes a variable of an element in \texttt{queue2}.
	\item \texttt{B}, \texttt{B1}, and \texttt{B2} denote variables of \texttt{bit1} or \texttt{bit2}.
\end{itemize}
The rewrite rule \texttt{send1} is defined as follows:
\begin{lstlisting}
crl [send1] : 
{(queue1: PQ) (bit1: B) (index: I) OCs} => 
{(queue1: (PQ pcell(B, I))) (bit1: B) (index: I) OCs}
if I <= maxIndex /\ length(PQ) <= maxQueueLength .
\end{lstlisting}
The rewrite rule specifies that if the index number, denoted as \texttt{I}, is less than or equal to the predefined \texttt{maxIndex}, and the length of \texttt{queue1}, denoted as \texttt{PQ} is less than or equal to the predefined \texttt{maxQueueLength},
the message \texttt{pcell(B, I)} is appended to the end of \texttt{queue1}.
This message encapsulates both the Boolean value (\texttt{B}) and the index number (\texttt{I}).
Both \texttt{maxIndex} and \texttt{maxQueueLength} are natural numbers.
They are used to limit the reachable state space during model checking.
This limitation is similarly applied in other rewrite rules.

The rewrite rule \texttt{rec1} is defined as follows:
\begin{lstlisting}
rl [rec1] : 
{(queue2: (BC BQ)) (bit1: B) (index: I) OCs} => 
{(queue2: BQ) 
 (bit1: (if bit(BC) = B then B else not B fi)) 
 (index: (if bit(BC) = B then I else s(I) fi)) OCs} .
\end{lstlisting}
The rewrite rule specifies that if the cell2 message queue is not empty (indicating the existence of awaiting cell2 messages within \texttt{queue2}), the sender processes the top cell2 message from this queue.
If the \texttt{bit1} maintained by the sender does not equal the bit contained in \texttt{cell2} at the front of the \texttt{queue2}, \texttt{bit1} is toggled (\texttt{not B}), and the index is incremented by 1 (\texttt{s(I)}).
Otherwise, both the \texttt{bit1} and the \texttt{index} remain unchanged.
The message at the front of \texttt{queue2} is removed.

The rewrite rule \texttt{send2} is defined as follows:
\begin{lstlisting}
crl [send2] : 
{(queue2: BQ) (bit2: B) OCs} =>
{(queue2: (BQ bcell(B))) (bit2: B) OCs}
if length(BQ) <= maxQueueLength .
\end{lstlisting}
The rewrite rule specifies that if the length of \texttt{queue2}, denoted as \texttt{BQ} is less than or equal to the predefined \texttt{maxQueueLength},
the message \texttt{bcell(B)} is appended to the end of \texttt{queue2}.
This message contains the Boolean value (\texttt{B}).

The rewrite rule \texttt{rec2} is defined as follows:
\begin{lstlisting}
rl [rec2] : 
{(queue1: (PC PQ)) (bit2: B) (list: L) OCs} => 
{(queue1: PQ) 
 (bit2: (if bit(PC) = B then not B else B fi)) 
 (list: (if bit(PC) = B then (index(PC) L) else L fi)) OCs} .
\end{lstlisting}
The rewrite rule specifies that if the cell1 message queue is not empty (indicating the existence of awaiting cell1 messages within \texttt{queue1}), the receiver processes the top cell1 message from this queue.
If the \texttt{bit2} maintained by the receiver equals the bit contained in \texttt{cell1} at the front of the \texttt{queue1}, \texttt{bit2} is toggled (\texttt{not B}), and the index is added to the list (\texttt{(index(PC) L)}).
Otherwise, both the \texttt{bit2} and the \texttt{list} remain unchanged.
The message at the front of \texttt{queue1} is removed.

The rewrite rule \texttt{drop1} is defined as follows:
\begin{lstlisting}
rl [drop1] : 
{(queue1: (PC PQ)) OCs} => {(queue1: PQ) OCs} .
\end{lstlisting}
The rewrite rule specifies that if the cell1 message queue is not empty, the message at the front of \texttt{queue1} is removed.

The rewrite rule \texttt{dup1} is defined as follows:
\begin{lstlisting}
crl [dup1] : 
{(queue1: (PC PQ)) OCs} => {(queue1: (PC PC PQ)) OCs}
if length(PQ) <= maxQueueLength .
\end{lstlisting}
The rewrite rule specifies that if the cell1 message queue is not empty, and the length of \texttt{queue1} is less than or equal to the predefined \texttt{maxQueueLength}, the message at the front of \texttt{queue1} is duplicated.

Similarly, \texttt{drop2} and \texttt{dup2} are defined for \texttt{queue2} in the same manner as \texttt{drop1} and \texttt{dup1} for \texttt{queue1}; therefore,
they are omitted to avoid redundancy.

The invariant property candidate of ABP is defined in Maude as follows:
\begin{lstlisting}
eq ABP-inv(ABP) = 
 (getBit1(ABP) = getBit2(ABP) implies 
   mk(getIndex(ABP)) = (getIndex(ABP) getList(ABP))) and 
 (getBit1(ABP) != getBit2(ABP) implies 
   mk(getIndex(ABP)) = getList(ABP)) .
\end{lstlisting}
\texttt{ABP-inv} ensures that the protocol's state remains consistent and correct.
\texttt{ABP-inv} is defined such that if \texttt{getBit1(ABP)} (the sender's bit) is equal to \texttt{getBit2(ABP)} (the receiver's bit),
then \texttt{mk(getIndex(ABP))} is equal to \texttt{(getIndex(ABP) getList(ABP))},
and if \texttt{getBit1(ABP)} is not equal to \texttt{getBit2(ABP)},
then \texttt{mk(getIndex(ABP))} is equal to \texttt{getList(ABP)}.

\subsection{Formal specification of the Simple Communication Protocol (SCP)}
The Simple Communication Protocol (SCP) is a more abstract version of ABP, where the primary distinction lies in the communication channels.
While ABP uses queues to handle multiple messages,
SCP uses two unreliable cells, \texttt{cell1} and \texttt{cell2}, which can be viewed as queues with a fixed capacity of one.
In ABP, the number of reachable states grows exponentially as parameters such as \texttt{maxQueueLength} increase, often leading to the state space explosion problem.
In contrast, by abstracting the communication channels into single-capacity cells, SCP inherently limits the complexity of the system's state configurations.
As a result, the total number of reachable states in SCP is considerably smaller than that in ABP, making direct model checking feasible and efficient without encountering state space explosion.
Due to the structural similarity between the two protocols, we define SCP by referring to the ABP specification described in the previous section as follows:
\begin{itemize}
	\item \textbf{Observable Components}: The definitions for \texttt{index}, \texttt{list}, \texttt{bit1}, and \texttt{bit2}, along with their corresponding Maude variables, are identical to those in ABP.
	      The only structural difference is that \texttt{queue1} and \texttt{queue2} in ABP are replaced by \texttt{cell1} and \texttt{cell2} in SCP, respectively.
	\item \textbf{Initial State}: The initial state of SCP follows the same logic as ABP, with both \texttt{cell1} and \texttt{cell2} initialized to \texttt{empty}.
	\item \textbf{Rewrite Rules}: SCP is specified by six rewrite rules: \texttt{send1}, \texttt{rec1}, \texttt{send2}, \texttt{rec2}, \texttt{drop1}, and \texttt{drop2}.
	      These rules are operationally analogous to the corresponding rules in ABP, but operate on a single-capacity buffer.
	\item \textbf{Omission of Duplication Rewrite Rules}: Unlike ABP, SCP does not include \texttt{dup1} or \texttt{dup2}, as the single-capacity buffer does not support message duplication in this model.
\end{itemize}

For instance, we present the rewrite rule \texttt{send2} in SCP as a representative example of this simplification:
\begin{lstlisting}
rl [send2] : 
{(cell2: BC) (bit2: B) OCs} => 
{(cell2: bcell(B)) (bit2: B) OCs} .
\end{lstlisting}
The rewrite rule specifies that the message containing \texttt{B} is put into \texttt{cell2}.
In ABP, the corresponding \texttt{send2} rule is a conditional rewrite rule (\texttt{crl}) that includes a guard condition to ensure that the current queue length does not exceed a predefined \texttt{maxQueueLength}.
In contrast, because SCP models the channel as a single-element cell, this capacity check is eliminated, and the rewrite rule becomes an unconditional rewrite rule (\texttt{rl}).
To avoid redundancy, we omit the detailed descriptions of other SCP rules, such as \texttt{send1} or \texttt{rec1}, as they follow an analogous modification pattern.

The invariant property candidate of SCP is defined in Maude as follows:
\begin{lstlisting}
eq SCP-inv(SCP) = 
 (getBit1(SCP) = getBit2(SCP) implies 
  mk(getIndex(SCP)) = (getIndex(SCP) getList(SCP))) and 
 (getBit1(SCP) != getBit2(SCP) implies 
  mk(getIndex(SCP)) = getList(SCP)) .
\end{lstlisting}
\texttt{SCP-inv} is defined using the same logical structure as \texttt{ABP-inv}.
\texttt{SCP-inv} ensures that the protocol's state remains consistent and correct.
Similarly to \texttt{ABP-inv}, \texttt{SCP-inv} is defined such that
if \texttt{getBit1(SCP)} is equal to \texttt{getBit2(SCP)}, then \texttt{mk(getIndex(SCP))} is equal to \texttt{(getIndex(SCP) getList(SCP))},
and if \texttt{getBit1(SCP)} is not equal to \texttt{getBit2(SCP)}, then \texttt{mk(getIndex(SCP))} is equal to \texttt{getList(SCP)}.

\subsection{Verification Results}\label{sec:case_study_1_results}
In this section, we demonstrate that ABP satisfies its invariant property,
\texttt{ABP-inv}, by leveraging the abstract model SCP and the simulation relation established between the two protocols.
Following the three-step methodology illustrated in Fig \ref{overview}, we first conduct model checking on the abstract state machine, SCP (Step 1).

\subsubsection{Step 1: Verification of Invariant Property on SCP}
In this verification, we set the parameter to \texttt{maxIndex}$\ = 64$.
We can check whether SCP satisfies the invariant property candidate, \texttt{SCP-inv}, or not by using the following \texttt{search} command:
\begin{lstlisting}
search[1] in SCP : init =>* 
 {(bit1: B1) (bit2: B2) (index: I) (list: L) OCs} 
such that not SCP-inv(scp(B1, I, B2, L)) .
\end{lstlisting}
Maude returns \texttt{No solution}.
The result \texttt{No solution} formally verifies that \texttt{SCP-inv} holds throughout all reachable states of the protocol.

\subsubsection{Step 2: Deducibility of ABP-inv from SCP-inv}
We conjecture a simulation relation candidate \texttt{r2} from ABP to SCP.
This simulation relation candidate requires that if \texttt{SCP-inv} is an invariant property wrt SCP and ABP is simulated by SCP, then \texttt{ABP-inv} is an invariant property wrt ABP.
The simulation relation candidate from ABP to SCP is defined as follows:
\begin{lstlisting}
eq r2(ABP, SCP) = 
 (getBit1(ABP) = getBit1(SCP)) and 
 (getBit2(ABP) = getBit2(SCP)) and 
 (getIndex(ABP) = getIndex(SCP)) and 
 (getList(ABP) = getList(SCP)) .
\end{lstlisting}
This demonstrates that \texttt{bit1}, \texttt{bit2}, \texttt{index}, and \texttt{list} are equivalent in ABP and SCP.
The parameter \texttt{ABP} is defined as a value of the data type \texttt{ABPState}, which represents the state of the Alternating Bit Protocol.
Similarly, the parameter \texttt{SCP} is defined as a value of the data type \texttt{SCPState}, representing the state of the Simple Communication Protocol.
Note that \texttt{r2} omits the contents of the communication channels, as they are not required to deduce \texttt{ABP-inv} from \texttt{SCP-inv}.
This mapping is sufficient because both invariant properties depend only on the sender and receiver's local states.

We demonstrate that if \texttt{SCP-inv} is an invariant property wrt SCP and ABP is simulated by SCP, then \texttt{ABP-inv} is an invariant property wrt ABP.
We confirm this using the following function.
\begin{lstlisting}
eq ABP2SCP-sim(ABP, SCP) = 
 r2(ABP, SCP) implies (SCP-inv(SCP) implies ABP-inv(ABP)) .
\end{lstlisting}

We describe the proof of \texttt{SCP-inv}($s_{SCP}$) $\Rightarrow$ \texttt{ABP-inv}($s_{ABP}$) for arbitrary states $s_{SCP}$, $s_{ABP}$ with \texttt{r2}($s_{ABP}$, $s_{SCP}$).
Considering the simulation relation candidate and the invariant property, the cases can be divided as follows.
\begin{enumerate}
	\item \texttt{r2} is true, \texttt{SCP-inv} is true, and \texttt{bit1} is equal to \texttt{bit2}.
	\item \texttt{r2} is true, \texttt{SCP-inv} is true, and \texttt{bit1} is different from \texttt{bit2}.
	\item \texttt{r2} is true, \texttt{SCP-inv} is false, and \texttt{bit1} is equal to \texttt{bit2}.
	\item \texttt{r2} is true, \texttt{SCP-inv} is false, and \texttt{bit1} is different from \texttt{bit2}.
	\item \texttt{r2} is false.
\end{enumerate}
The module and the corresponding \texttt{red} command for case 2 are provided below:
\begin{lstlisting}
fmod SIM_ABP2SCP_02 is
  pr SIM .
  
  ops b1 b2 b3 b4 : -> Bool+ .
  ops i1 i2 : -> Nat+ .
  ops l1 l2 : -> List{Nat+} .
  
  eq (b1 = b2) = false .
  eq (b1 = b3) = true .
  eq (b2 = b4) = true .
  eq (b3 = b4) = false .
  eq (i1 =Nat+ i2) = true .
  eq l1 = mk(i1) .
  eq l2 = mk(i2) .
  eq (mk(i1) = mk(i2)) = true .
endfm
\end{lstlisting}

As shown in the module \texttt{SIM\_ABP2SCP\_02}, we introduce several fresh constants.
In this context, \texttt{Nat+} and \texttt{Bool+} are data types used to represent arbitrary natural numbers and Boolean values, respectively,
enabling the symbolic verification of each case.
\begin{lstlisting}
red in SIM_ABP2SCP_02 : 
 ABP2SCP-sim(abp(pqueue, bqueue, b1, i1, b2, l1), 
             scp(pcell, bcell, b3, i2, b4, l2)) .
\end{lstlisting}
The constants \texttt{b1}, \texttt{b2}, \texttt{b3}, and \texttt{b4} represent arbitrary Boolean values,
and \texttt{i1} and \texttt{i2} denote arbitrary natural numbers.
Similarly, \texttt{l1} and \texttt{l2} represent arbitrary lists of natural numbers.
For the communication channels in ABP, \texttt{pqueue} and \texttt{bqueue} represent arbitrary values for the message queues from the sender to the receiver and from the receiver to the sender, respectively.
For the corresponding cells in SCP, \texttt{pcell} and \texttt{bcell} represent arbitrary values from the sender to the receiver and from the receiver to the sender, respectively.
This \texttt{red} command returns true.
Although case 2 is presented here as a representative demonstration,
all other cases listed above were similarly verified using \texttt{red} command.
All cases yielded the result true,
thereby formally establishing the deducibility of \texttt{ABP-inv} from \texttt{SCP-inv} under the simulation relation \texttt{r2}.

\subsubsection{Step 3: Demonstration of Simulation Relation from ABP to SCP}
First, we verify Definition \ref{definition_simulation} (i) is satisfied.
we confirm that the simulation relation holds between the initial states of ABP and SCP as follows:
\begin{lstlisting}
red in SIM : 
 r2(abp(nil, nil, false, 0, false, nil), 
    scp(empty, empty, false, 0, false, nil)) .
\end{lstlisting}
Maude returns true for this \texttt{red} command.
This means that the simulation relation \texttt{r2} holds between the initial states of ABP and SCP.

To verify that Definition \ref{definition_simulation} (ii) is satisfied,
we apply the verification procedure consisting of Maude commands (i), (ii), and (iii) as depicted in Fig \ref{3steps} to each rewrite rule of ABP.
This process involves case splitting for each of the eight rewrite rules to ensure that all potential transitions are formally accounted for.
The full set of these generated cases and the corresponding Maude code are publicly available in our repository.
For illustrative purposes, we present the verification of the \texttt{send2} rule as a representative example.

Before executing the commands, we define the representative concrete state ($s_{ABP}$) and its corresponding abstract state ($s_{SCP}$) using fresh constants to ensure the generality of the verification.
Let $s_{ABP}$ be the ABP state for this case:
\begin{lstlisting}
{(queue1: pqueue) (queue2: bqueue) (bit1: bool) 
 (index: index) (bit2: bool) (list: list)}
\end{lstlisting}
The corresponding SCP state, $s_{SCP}$, is constructed as follows:
\begin{lstlisting}
{(cell1: pcell) (cell2: bcell) (bit1: bool) 
 (index: index) (bit2: bool) (list: list)}
\end{lstlisting}
In these terms, \texttt{pqueue}, \texttt{bqueue}, \texttt{pcell}, \texttt{bcell}, \texttt{bool}, \texttt{index}, and \texttt{list} are fresh constants representing arbitrary values.
While \texttt{index} is constrained to be less than or equal to \texttt{maxIndex},
the remaining constants represent unconstrained and arbitrary values.

The procedure begins with command (i), where the \texttt{srew} command is applied to $s_{ABP}$ to identify the successor state in ABP.
Maude identifies a single successor state $s'_{ABP}$ resulting from the application of the \texttt{send2} rule:
\begin{lstlisting}
{(queue1: pqueue) (queue2: (bqueue bcell(bool))) (bit1: bool)
 (index: index) (bit2: bool) (list: list)}
\end{lstlisting}

Next, in command (ii), we verify that the simulation relation \texttt{r2} holds between the states $s_{ABP}$ and $s_{SCP}$ by executing a \texttt{red} command.
Maude evaluates this command to true, confirming that the simulation relation \texttt{r2} holds between $s_{ABP}$ and $s_{SCP}$.

Finally, we execute command (iii) to search for a state $s'_{SCP}$ in SCP that preserves the simulation relation with $s'_{ABP}$ in zero or more steps.
Specifically, we use a \texttt{search} command to find a state transition from $s_{SCP}$ to a state that satisfies \texttt{r2}($s'_{ABP} ,\ s'_{SCP}$).
Maude successfully finds a solution for this search, thereby identifying a valid state transition in the abstract model that corresponds to the concrete state transition.

These collective results demonstrate that,
for all state transitions induced by the \texttt{send2} rule in ABP,
there exist corresponding transitions in SCP that preserve the simulation relation.
By exhaustively applying this three-command procedure to the remaining seven rewrite rules,
we formally confirm that the state transition condition in Definition \ref{definition_simulation} (ii) is fully satisfied for ABP.

\subsubsection{Summary of Verification}
Through the successful execution of Steps 1, 2, and 3,
we have formally demonstrated that the invariant property \texttt{SCP-inv} holds for SCP,
that \texttt{ABP-inv} is logically deducible from \texttt{SCP-inv} under the simulation relation \texttt{r2},
and that ABP is simulated by SCP.
Based on Theorem \ref{theorem1}, these results collectively prove that the invariant property \texttt{ABP-inv} holds for ABP.

\subsection{Discussion}
We conducted the experiments described in this paper on a computer equipped with a 3.3 GHz microprocessor and 32 GB of memory.
Maude 3.3.1 was used for verification across all case studies.

Table~\ref{performance_comparison} summarizes the performance comparison between direct model checking and our proposed methodology.
To evaluate the effectiveness of our approach (\textbf{RQ1}), we first attempted direct model checking on the ABP state machine.
With parameters set to \texttt{maxIndex}$\ = 64$ and \texttt{maxQueueLength}$\ = 32$, the process failed to conclude.
Preliminary tests indicated that the verification time increases sharply as these parameters grow.
The chosen values represent a configuration where state space explosion renders exhaustive direct model checking infeasible within a reasonable timeframe, effectively highlighting the practical boundaries of conventional methods.
We deliberately terminated the verification after one hour, judging that state space explosion rendered direct model checking infeasible within a practical timeframe.

\begin{table}[!ht]
	\begin{adjustwidth}{-0.5in}{0in}
		\centering
		\caption{
			\textbf{Performance comparison of verification methods for ABP (maxIndex $=$ 64, maxQueueLength $=$ 32).}
		}
		\label{performance_comparison}
		\begin{tabular}{|l|l|l|l|l|}
			\hline
			\textbf{Verification Method} & \textbf{Step} & \textbf{Maude Command}                       & \textbf{Time} & \textbf{Result}           \\ \thickhline
			Direct Model Checking        & -             & \texttt{search}                              & $>$ 3,600 s   & \textbf{Timeout (Failed)} \\ \hline
			Proposed Methodology         & Step 1        & \texttt{search} (on SCP)                     & $<$ 1 s       & Verified                  \\ \cline{2-5}
			                             & Step 2        & \texttt{red}                                 & $<$ 1 s       & Verified                  \\ \cline{2-5}
			                             & Step 3        & \texttt{srew}, \texttt{red}, \texttt{search} & $<$ 1 s       & Verified                  \\ \hline
			% \textbf{Total (Proposed)}    & -             & -                                            & < 3 s         & \textbf{Verified}         \\ \hline
		\end{tabular}
		\label{table1}
	\end{adjustwidth}
\end{table}

In contrast, our proposed methodology completed the entire verification process with minimal computational overhead.
Specifically, Steps 1, 2, and 3 each finished in less than one second.
The high efficiency of Step 1 with \texttt{maxIndex}$\ = 64$ is attributed to the significantly reduced state space of SCP, which abstracts queues into single-capacity cells. Furthermore,
since Steps 2 and 3 rely on symbolic execution with fresh constants rather than exhaustive state space traversal, their execution times remain sub-second and independent of the parameter sizes that typically cause state space explosion.
These results provide a definitive affirmative answer to \textbf{RQ1}, proving that our methodology can mitigate the state space explosion problem in ABP effectively.

While our methodology effectively mitigates the state space explosion problem, it introduces the requirement for manual abstraction and the definition of simulation relations.
To minimize this burden, we developed a support tool \cite{DBLP:conf/sfpvv/IshibashiNO25}.
By automating the case splitting arising from associative commutative pattern matching and the execution of the three-command sequence (\texttt{srew}, \texttt{red}, and \texttt{search}) in step 3, the tool reduces the risk of human error and ensures that the verification process remains rigorous.
The tool is publicly available \cite{DBLP:conf/sfpvv/IshibashiNO25}.

\section{Case study II: The Sliding Window Protocol (SWP)}\label{sec:case_study_2}
This section demonstrates the formal verification of the Sliding Window Protocol (SWP) to address \textbf{RQ2}.
To overcome this, we employ the composition of simulation relations as established in Section~\ref{sec:composition}.
Specifically, we prove that SWP satisfies its invariant property by establishing a simulation relation from SWP to ABP and composing it with the previously verified simulation from ABP to SCP .
This hierarchical approach allows us to deduce the correctness of SWP from the abstract SCP invariant property through the intermediate ABP model.

\subsection{Formal specification of the Sliding Window Protocol (SWP)}
The Sliding Window Protocol (SWP) is a fundamental method
for reliable and sequential data transmission over a communication channel.
This protocol works by allowing the sender to transmit multiple messages before receiving an acknowledgement from the receiver, effectively managing the flow of data.
This protocol has a window mechanism where the window size is the number of messages that can be sent without an acknowledgement.
There are differences between SWP and ABP.
In SWP, the sender can send messages up to the window size without waiting for an acknowledgement from the receiver.
On the other hand, in ABP, the sender needs to wait for an acknowledgement from the receiver before sending each message.

To maintain conciseness and clarity,
we define the formal specification of SWP by highlighting its extensions and differences compared to the ABP model described in Section~\ref{sec:abp}.
\begin{itemize}
	\item \textbf{Observable Components}: While the basic communication structure involving \texttt{index}, \texttt{list}, and unreliable channels (\texttt{queue1}, \texttt{queue2}) remain the same,
	      SWP introduces several components to manage its window mechanism.
	      These include \texttt{seq} (the current sequence number), \texttt{windowCount} (the number of unacknowledged frames),
	      \texttt{expectedFrameSeq} (the next sequence number the receiver expects), and \texttt{expectedAckSeq} (the next acknowledgment the sender expects).
	\item \textbf{Maude Variables}: In the formalization of SWP, the variables \texttt{OCs} (observable components), \texttt{I} (index), and \texttt{L} (list) are identical to those used in the ABP specification.
	      However, other variables are uniquely defined for SWP to reflect its specific data structures,
	      such as \texttt{FQ} and \texttt{AQ} for the queues,
	      and \texttt{S}, \texttt{C}, \texttt{F}, and \texttt{A} for managing sequence numbers and window counts.
	\item \textbf{Initial State}: The initial state of SWP follows a similar logic to ABP, with all sequence-related components initialized to zero and communication channels set to empty.
	\item \textbf{Rewrite Rules}: SWP is specified by eight rewrite rules.
	      The rules \texttt{send1}, \texttt{rec1}, \texttt{send2}, and \texttt{rec2} are modified to incorporate sequence numbers and window management logic.
	      Notably, the rewrite rules \texttt{drop1}, \texttt{drop2}, \texttt{dup1}, and \texttt{dup2} for SWP are operationally identical to those defined for ABP in Section \ref{sec:abp}.
	      Therefore, their detailed specifications are omitted here to avoid redundancy.
\end{itemize}

The rewrite rule \texttt{send1} is defined as follows:
\begin{lstlisting}
crl [send1] : 
{(queue1: FQ) (seq: S) (index: I) (windowCount: C) OCs} => 
{(queue1: (FQ frame(S, I))) (seq: next(S)) 
 (index: s(I)) (windowCount: s(C)) OCs}
if C < windowSize /\ I <= maxIndex /\ 
   length(FQ) <= maxQueueLength .
\end{lstlisting}
The rewrite rule specifies that if the window count, denoted as \texttt{C}, is less than the predefined \texttt{windowSize}, the index number, denoted as \texttt{I}, is less than or equal to the predefined \texttt{maxIndex}, and the length of \texttt{FQ} is less than or equal to the predefined \texttt{maxQueueLength},
the message \texttt{frame(S, I)} is appended to the end of \texttt{queue1}.
This message encapsulates both the sequence number (\texttt{S}) and the index number (\texttt{I}).
The sender increments the index number and the window count by 1, and updates the sequence number.
In this formal specification, \texttt{windowSize} is set to 16.

The \texttt{next} function is defined as follows:
\begin{lstlisting}
eq next(N) = (N + 1) rem windowSize .
\end{lstlisting}
This function calculates the next number for a given number \texttt{N}. Specifically, it adds 1 to \texttt{N} and then takes the remainder of dividing this sum by \texttt{windowSize}.
This operation is used to cycle numbers within a specific range defined by \texttt{windowSize}.
This mechanism implies that the sequence numbers follow a circular order rather than a linear one.
For instance, if \texttt{windowSize} is 16, the numbers cycle from 0 to 15 and back to 0, where 0 is considered the successor of 15 in this circular sequence.

The rewrite rule \texttt{rec1} is defined as follows:
\begin{lstlisting}
rl [rec1] : 
{(queue2: (AC AQ)) (expectedAckSeq: A) 
 (seq: S) (windowCount: C) OCs} => 
{(queue2: AQ) (expectedAckSeq: expectedAckSeq(A, seq(AC), S)) 
 (seq: S) (windowCount: windowCount(C, A, seq(AC), S)) OCs} .
\end{lstlisting}
The rewrite rule specifies that if the ack message queue is not empty (indicating the existence of awaiting ack messages within \texttt{queue2}),
the sender processes the top ack message from this queue.
The sender updates the \texttt{expectedAckSeq} to reflect the sequence number of the next expected ack message,
and updates the \texttt{windowCount} accordingly to manage the flow of outgoing messages.
The message at the front of \texttt{queue2} is removed.

To describe the \texttt{expectedAckSeq} function and the \texttt{windowCount} function, we first discuss the \texttt{between} function.
This is because both the \texttt{expectedAckSeq} function and the \texttt{windowCount} function use the \texttt{between} function.
The \texttt{between} function is defined as follows:
\begin{lstlisting}
eq between(N1, N2, N3) = 
 ((N1 <= N2) and (N2 < N3)) or 
 ((N3 < N1) and (N1 <= N2)) or 
 ((N2 < N3) and (N3 < N1)) .
\end{lstlisting}
A circular order governs the relationship between these numbers, where the interval from \texttt{N1} to \texttt{N3} may wrap around the maximum sequence value.
The \texttt{between} function evaluates whether a given sequence number falls within this circular arc, which is essential for deciding whether a received frame message should be accepted or discarded.
It returns true under any of the following conditions:
\begin{itemize}
	\item \texttt{N1} is less than or equal to \texttt{N2}, and \texttt{N2} is less than \texttt{N3}.
	\item \texttt{N3} is less than \texttt{N1}, and \texttt{N1} is less than or equal to \texttt{N2}.
	\item \texttt{N2} is less than \texttt{N3}, and \texttt{N3} is less than \texttt{N1}.
\end{itemize}

\texttt{expectedAckSeq} function is defined as follows:
\begin{lstlisting}
ceq expectedAckSeq(N1, N2, N3) = 
 expectedAckSeq(next(N1), N2, N3) if between(N1, N2, N3) .
ceq expectedAckSeq(N1, N2, N3) = N1 if not between(N1, N2, N3) .
\end{lstlisting}
The \texttt{expectedAckSeq} function calculates the expected ack sequence number given three numbers \texttt{N1}, \texttt{N2}, and \texttt{N3}. It uses the between function to determine if \texttt{N2} lies between \texttt{N1} and \texttt{N3}. The process is as follows:
\begin{itemize}
	\item If \texttt{N2} is between \texttt{N1} and \texttt{N3} (\texttt{between(N1, N2, N3)} is true), the function recursively calls itself with the next number of \texttt{N1} (\texttt{next(N1)}), \texttt{N2}, and \texttt{N3} as the new arguments. This recursion continues until \texttt{N2} is no longer between \texttt{N1} and \texttt{N3}.
	\item If \texttt{N2} is not between \texttt{N1} and \texttt{N3}, it simply returns \texttt{N1}.
\end{itemize}

The \texttt{windowCount} function is defined as follows:
\begin{lstlisting}
ceq windowCount(N1, N2, N3, N4) = 
 windowCount(sd(N1, 1), next(N2), N3, N4) if between(N2, N3, N4) .
ceq windowCount(N1, N2, N3, N4) = N1 if not between(N2, N3, N4) .
\end{lstlisting}
The \texttt{windowCount} function, which involves four numbers \texttt{N1}, \texttt{N2}, \texttt{N3}, and \texttt{N4}, updates \texttt{N1} based on specific conditions. This function is used for determining the window count within a certain numerical range, using the between function to check if \texttt{N3} lies between \texttt{N2} and \texttt{N4}. The process is as follows:
\begin{itemize}
	\item If \texttt{N3} is between \texttt{N2} and \texttt{N4} (\texttt{between(N2, N3, N4)} is true), the function recursively calls itself with \texttt{N1} updated to \texttt{sd(N1, 1)} (subtracting 1 from \texttt{N1}), \texttt{N2} to its next number \texttt{next(N2)}, and keeps \texttt{N3} and \texttt{N4} as they are. This recursion continues until the condition becomes false.
	\item If \texttt{N3} is not between \texttt{N2} and \texttt{N4}, it simply returns \texttt{N1}.
\end{itemize}

The rewrite rule \texttt{send2} is defined as follows:
\begin{lstlisting}
crl [send2] : 
{(queue2: AQ) (expectedFrameSeq: S) (list: L) OCs} => 
{(queue2: (AQ ack(prev(S)))) (expectedFrameSeq: S) (list: L) OCs}
if length(L) > 0 /\ length(AQ) <= maxQueueLength .
\end{lstlisting}
The rewrite rule specifies that if the list, denoted as \texttt{L}, is not empty, and the length of \texttt{AQ} is less than or equal to the predefined \texttt{maxQueueLength},
the message \texttt{ack(prev(S))} is appended to the end of \texttt{queue2}.
This message encapsulates the previous sequence number.

The \texttt{prev} function is defined as follows:
\begin{lstlisting}
eq prev(N) = (N + sd(windowSize, 1)) rem windowSize . 
\end{lstlisting}
The function \texttt{prev} calculates the preceding number of a given number \texttt{N} within a cyclic range defined by \texttt{windowSize}. The \texttt{sd(windowSize, 1)} subtracts 1 from \texttt{windowSize}.
The function adds the result of \texttt{sd(windowSize, 1)} to \texttt{N}, and then takes the remainder of dividing this sum by \texttt{windowSize}. This operation ensures that the calculation cycles through the numbers within the range of windowSize.
For instance, if \texttt{windowSize} is 16, this function cycles numbers between 0 and 15.
If \texttt{N} is 0, the previous number is 15.

The rewrite rule \texttt{rec2} is defined as follows:
\begin{lstlisting}
rl [rec2] : 
{(queue1: (FR FQ)) (expectedFrameSeq: S) (list: L) OCs} => 
{(queue1: FQ) 
 (expectedFrameSeq: (if seq(FR) = S then next(S) else S fi)) 
 (list: (if seq(FR) = S then (index(FR) L) else L fi)) OCs} .
\end{lstlisting}
The rewrite rule specifies that if the frame message queue is not empty (indicating the existence of awaiting frame messages within \texttt{queue1}), the receiver processes a frame message from this queue.
If the sequence number contained in the frame message aligns with the expected number, the receiver then updates the \texttt{expectedFrameSeq} to reflect the sequence number of the next expected frame message,
and stores the index number to the list.
The message at the front of \texttt{queue1} is removed.

In this paper, the invariant property candidate of SWP is defined as the receiver storing the index numbers sent by the sender in the list in order, without duplication or loss.
The invariant property candidate of SWP is defined in Maude as follows:
\begin{lstlisting}
eq SWP-inv(SWP) = subsequence(mk(getIndex(SWP)), getList(SWP)) .
\end{lstlisting}
\texttt{SWP-inv} is defined such that the list of messages stored by the receiver is a subsequence of the list of messages sent by the sender.
The \texttt{subsequence} function is defined to return true if the second list can be derived from the first by deleting zero or more elements while strictly preserving their relative order.
This definition mathematically ensures that the messages are neither duplicated, lost, nor reordered.

\subsection{Verification Results}
We conjecture a simulation relation candidate from the state machine SWP to the state machine ABP.
This simulation relation requires that if \texttt{ABP-inv} is an invariant property wrt ABP and SWP is simulated by ABP, then \texttt{SWP-inv} is an invariant property wrt SWP.

The simulation relation candidate from SWP to ABP is defined as follows:
\begin{lstlisting}
eq r1(SWP, ABP) = 
 (getIndex(SWP) >= getIndex(ABP)) and 
 (getList(SWP) = getList(ABP)) .
\end{lstlisting}
This shows that the index of SWP is greater than or equal to that of ABP, and the list of SWP is equivalent to the list of ABP.
This inequality captures the fundamental behavioral difference between the two protocols: while the sender in ABP must wait for an acknowledgment before incrementing the index for the next message, the sender in SWP can transmit multiple messages ahead up to the window size.
Consequently, the index in SWP can advance beyond that of the more abstract ABP, and this relation accurately abstracts such advanced states.
The argument \texttt{SWP} is a value of the data type \texttt{SWPState} representing the state of the Sliding Window Protocol.
The argument \texttt{ABP} is a value of the data type \texttt{ABPState} representing the state of the Alternating Bit Protocol.
While SWP includes various internal components for window management,
\texttt{r1} omits them as they are not needed to deduce \texttt{SWP-inv} from \texttt{ABP-inv}.
This focused definition is sufficient for establishing a formal bridge between the two models.

We demonstrate that if \texttt{ABP-inv} is an invariant property wrt ABP and SWP is simulated by ABP, then \texttt{SWP-inv} is an invariant property wrt SWP.
We confirm this using the following function.
\begin{lstlisting}
eq SWP2ABP-sim(SWP, ABP) = 
 r1(SWP, ABP) implies (ABP-inv(ABP) implies SWP-inv(SWP)) .
\end{lstlisting}
We verified this implication by employing case splitting and the \texttt{red} command with fresh constants, similar to the process described in Case study I.
All cases yielded the result true, thereby confirming the deducibility of the SWP invariant.

Having established this deducibility, we then proceed to Step 3: demonstrating that SWP is simulated by ABP.
By applying the three-command verification procedure established in Section~\ref{sec:combining_model_checking_with_simulation_relations}---comprising (i) \texttt{srew}, (ii) \texttt{red}, and (iii) \texttt{search} commands---in the same manner as in Case study I,
we formally demonstrate that the simulation relation \texttt{r1} is preserved across all state transitions.

This result confirms that for every state transition in SWP,
there exists a corresponding sequence of transitions in ABP that preserves \texttt{r1}.
Since we have already established in Case study I that \texttt{ABP-inv} is an invariant property of ABP,
we can conclude that \texttt{SWP-inv} is an invariant property of SWP.
This conclusion is formally supported by the composition of simulation relations \texttt{r1} and \texttt{r2} as per Theorem \ref{theorem1}.

\subsection{Discussion}
Our experimental results provide a clear affirmative answer to \textbf{RQ2}.
For the verification experiments, we configured the parameters for SWP as \texttt{maxIndex}$\ = 64$, \texttt{maxQueueLength}$\ = 32$, and \texttt{windowSize}$\ = 16$, whereas those for ABP were set to \texttt{maxIndex}$\ = 64$ and \texttt{maxQueueLength}$\ = 32$.
Under these configuration, the reachable state space of SWP is significantly larger than that of the ABP model used in Case study I, rendering direct model checking entirely infeasible due to the state space explosion problem.
However, by decomposing the verification into two layers of simulation (from SWP to ABP and from ABP to SCP),
we successfully mitigated the state space explosion that makes direct model checking of SWP infeasible.

In Case study I, the invariant properties \texttt{ABP-inv} and \texttt{SCP-inv} were nearly identical in their logical structure and conceptual formulation.
In contrast, Case study II exhibited a significant disparity between the invariant properties of the concrete and intermediate models.
While \texttt{SWP-inv} is defined as a subsequence requirement---ensuring that the received messages exist within the sequence of sent messages---\texttt{ABP-inv} is formulated based on the specific state-matching logic of the alternating bit.
Despite this clear difference between the two properties,
we successfully verified \texttt{SWP-inv} by establishing a simulation relation that effectively bridges these distinct conceptual layers.
This result highlights the versatility of our proposed framework: it is not limited to cases where the abstract and concrete properties are similar,
but can also formally prove invariants even when the gap between the properties of the abstraction layers is significant.

Our approach offers practical advantages by leveraging domain knowledge to tailor abstractions, providing greater flexibility.
Furthermore, since many distributed protocols share fundamental requirements such as data consistency and ordering guarantees, our technique facilitates the reuse of previously verified formal specifications as abstract models for new targets.
This reusability, demonstrated by our hierarchical verification of SWP using the ABP and SCP models, provides a scalable and efficient pathway for the formal verification of complex systems.

\section{Related work}\label{sec:related_work}
``Refinement'' serves as a top-down development paradigm that enables the systematic derivation of concrete implementations from abstract specifications through the stepwise augmentation of architectural details.
This paradigm traces its origins to the foundational work of Wirth \cite{DBLP:journals/cacm/Wirth83}, Hoare \cite{DBLP:conf/ac/Hoare75a}, Dijkstra \cite{DBLP:journals/cacm/Dijkstra75}, and Abadi et al. \cite{DBLP:journals/tcs/AbadiL91}.
Building upon these theoretical underpinnings, several formal methods and frameworks have been established, including Vienna Development Method (VDM) \cite{DBLP:books/daglib/0068091}, Z notation \cite{DBLP:books/daglib/0072139}, Abstract State Machines (ASM) \cite{DBLP:journals/fac/Borger03}, B-Method \cite{DBLP:books/daglib/0015096}, and Event-B \cite{DBLP:books/daglib/0024570}.
Recently, there has been growing interest in using Large Language Models (LLMs) for refinement, as exemplified by the work of Cai et al. \cite{DBLP:journals/pacmpl/CaiHSLLSD25} on LLM-guided code generation.

``Simulation'', in the algebraic sense defined by Milner \cite{DBLP:conf/ijcai/Milner71}, formalizes the principle of abstracting away implementation-specific details, such as data representation and control sequencing, to identify the core behavior shared by distinct programs.
This concept is characterized by a simulation relation acting as a quasi-ordering between programs; mutual simulation establishes an equivalence relation that facilitates abstraction from implementation-specific details.
Simulation is extended to concurrent computing through process calculi such as CSP \cite{hoare} and CCS \cite{milner1, milner2}.
Within these formalisms, bisimulation---a mutual simulation where processes mimic each other's transitions---became a cornerstone for formalizing behavioral equivalence.
This ensures that disparate implementations maintain identical observable behaviors in concurrent environments.
Subsequently, simulation is applied to the analysis of I/O automata to verify that concrete systems satisfy the trace properties of their abstract specifications \cite{DBLP:conf/podc/LynchT87, lynch}

Ogata and Futatsugi\cite{ogata2008simulation} have proposed a methodology for verifying that an observational transition system (OTS) $S$ maintains invariant properties through a simulation from $S$ to another OTS, which is more abstract than $S$, in the OTS/CafeOBJ method\cite{ogata2003proof}.
They concluded that there is no definitive evidence to suggest the superiority of one method over the other.
Subsequently, Tran et al. \cite{duong} demonstrated the efficacy of simulation-based invariant verification through case studies on the MCS and Anderson mutual exclusion protocols.
Using CafeOBJ \cite{diaconescu1998cafeobj} for formal modeling and proof, they reported that the simulation-based technique enabled a more efficient verification process for the MCS protocol than the induction-based alternative.

The validity of refinement or simulation relations is predominantly established through theorem proving.
Whether performed manually or with the assistance of interactive provers, these deductive approaches require significant human expertise and effort to construct rigorous proofs.
In contrast, the approach presented in this paper deviates from this convention by using model checking to verify that a simulation holds between state machines.

There have been several papers related to the verification of the Sliding Window Protocol (SWP).
Early work by Jonsson \cite{DBLP:conf/podc/Jonsson87} established a foundation for SWP verification using simulation, though the proofs remained manual and did not use model checker or theorem prover.
Similarly, Paliwoda and Sanders \cite{DBLP:journals/dc/PaliwodaS91} employed CSP to specify SWP incrementally by composing simpler components like the Alternating Bit Protocol (ABP).
Their approach remained a manual effort without the use of model checker or theorem prover.
Later efforts integrated model checking to improve automation.
Kaivola \cite{DBLP:conf/cav/Kaivola97} used compositional preorders to verify SWP, effectively reducing the domain of values for transmitted data items to a small set (e.g., $\{0, 1, 2\}$) to facilitate model checking.
Nevertheless, this approach still required manual proofs to generalize the findings to arbitrary sequence lengths.
Direct model checking of SWP has also been explored using NuSMV; for instance, Zhao et al. \cite{DBLP:journals/jcp/ZhaoYXL09} successfully verified the protocol but were forced to impose strict parameter constraints, such as a window size of 4, channel lengths of 8 for the data queue from sender to receiver and 16 for the acknowledgment queue from receiver to sender, and boolean-only messages.
In contrast, Case study II of this paper demonstrates that our proposed methodology can handle substantially larger parameters---including a window size of 16, channel capacities of 32, and message values up to 64 ---by combining the strengths of simulation and automated model checking in Maude.
This hierarchical approach eliminates the need for manual inductive proofs for generalization or strict algebraic constraints, providing a more scalable and automated pathway for verifying complex, practical protocols.

State space explosion is one of the most challenging problems in model checking.
Numerous researchers have addressed this problem and developed a variety of techniques to mitigate it.
Clarke et al. summarize several techniques that address the state space explosion problem in model checking \cite{Clarke2012}.
They introduce techniques such as symbolic model checking with Binary Decision Diagrams (BDDs), partial order reduction, counterexample-guided abstraction refinement (CEGAR), and bounded model checking.

Abstraction, in the context of model checking, aims to reduce the system's state space by omitting the irrelevant details \cite{edmund1, edmund2, jose1, yin1, yin2, bae, modelchecking01, modelchecking02}.
This process facilitates the reduction of problems associated with either an infinite state system or a finite but excessively large one.
Although abstraction mitigates this issue to a certain degree, challenges persist when addressing large systems, often hindering the execution of some model checking experiments.
Some abstraction techniques have been introduced for use in Maude \cite{maude}, such as equation abstractions \cite{jose1} and predicate abstractions \cite{jose2}.
They target state space explosion but involve manual overhead.
While equational abstractions require the manual discovery of equations and often rely on theorem proving to guarantee property preservation in conditional equations,
and predicate abstractions depend on E-unification, which requires the strict finite variant property, and manual adjustment of state predicates to resolve spurious counterexamples,
our methodology offers a more accessible and streamlined verification process.
Although the construction of the abstract model and simulation relations remains a manual task, the validity of their relationship is verified semi-automatically using standard Maude commands (specifically \texttt{srew}, \texttt{red}, and \texttt{search}).
This approach reduces the reliance on extensive deductive expertise by offloading the verification of the abstraction's correctness to an executable model-checking-based procedure.

{\'{A}}kos and Zolt{\'{a}}n have proposed a framework that extends abstraction and refinement methods to improve the efficiency of CEGAR \cite{hajdu}.
This framework mitigates the state space explosion problem.
It is also noted that even with the framework, sufficient effectiveness may not be achieved in certain cases.
In the paper of Mann et al. \cite{mann}, the authors extend CEGAR for the model checking of infinite-state systems containing arrays through the incorporation of prophecy variables.
By automatically adding auxiliary variables (namely, history and prophecy variables) to the system, their approach transforms proofs that traditionally require universally quantified invariants into quantifier-free forms.
It has been reported, however, that in some cases the CEGAR loop may fail to terminate or suffer from scalability issues.
The primary distinction between CEGAR-based approaches and our methodology lies in the nature of the abstraction process. While CEGAR is a reactive technique that refines models based on spurious counterexamples, our approach is proactive, leveraging domain knowledge to construct abstract models such as SCP and ABP.
This manual but strategic abstraction provides enhanced transparency and interpretability; specifically, the simulation relations maintain a human-understandable mapping between concrete and abstract behaviors, a feature often lacking in automated refinements.
Furthermore, our methodology effectively addresses the scalability limitations of CEGAR by offloading the verification of complex protocols to a standardized sequence of symbolic Maude commands.
These commands are largely independent of the parameter scales that typically trigger state space explosion.
As summarized in Table~\ref{performance_comparison}, our experimental results confirm this efficiency: Steps 2 and 3 were completed within sub-second execution times, even with large parameters that rendered direct model checking infeasible.

Fisler and Vardi explored automated bisimulation minimization as a state space reduction technique within an automata-theoretic framework \cite{fisler}.
However, in most cases, the computational overhead of performing this minimization surpasses that of conventional model checking.
They argued that it is unsuitable for symbolic model checking because the BDDs required to compute the bisimulation relation become excessively large.
Bustan and Grumberg\cite{bustan1} have conducted research on a minimization algorithm that processes a Kripke structure $M$ and returns the smallest structure that is simulation equivalent to $M$.
Additionally, they developed the Partitioning Algorithm, which constructs the quotient structure for $M$ without the need to build the simulation preorder explicitly.
The concept of simulation equivalence is positioned as being less stringent than that of bisimulation, but more stringent than that of simulation preorder.
This relation preserves ACTL and LTL.
For the algorithms introduced in the paper \cite{bustan1} to be applicable to practical large-scale systems or complex models,
further studies are required to improve the time and space complexity of the algorithms.
Jansen et al. propose an algorithm that significantly improves the efficiency of computing branching bisimilarity on Labelled Transition Systems (LTSs)\cite{jansen}.

Zhou et al. proposes an approach to addressing the state space explosion problem in System on a Programmable Chip (SOPC) architectures\cite{zhou}.
The authors introduce techniques based on Register Transfer Level (RTL) code such as variable reduction and branch relation reduction tailored to the verification properties to compress the state space.
WASIM \cite{DBLP:conf/tacas/FangZ23} is a symbolic simulation framework for hardware formal verification.
Users can define abstraction and refinement functions through its Python API to control the level of abstraction.
It has been reported that leveraging domain knowledge in this way can achieve efficient verification.
Our methodology shares this philosophy by using domain knowledge to construct appropriate abstract models (such as SCP and ABP) for complex protocols.

Bae et al. \cite{DBLP:conf/wrla/BaeELMS24} proposed a deductive model checking methodology that synergistically integrates narrowing-based logical model checking with inductive theorem proving via the NuITP tool \cite{DBLP:conf/ppdp/Duran00S24}.
Their approach, implemented in the DM-Check tool, mitigates verification effort by employing an inductive theorem prover as an oracle to discharge verification conditions, such as subsumption checks.
While their case studies demonstrate the verification of ABP, our methodology extends to the more complex SWP.
SWP involves more intricate control logic than ABP, including window management and circular sequence numbers.
While their method excels in handling complex logical constraints and conditional rewrite theories, the proposed approach focuses on addressing the state space explosion caused by large protocol parameters, such as window sizes in SWP, through hierarchical simulation compositions.

While the theory of simulation has long been established in formal methods, our work provides a practical and engineering contribution by redefining this classical theory into a highly efficient and human-understandable ``procedure'' within the Maude model checking environment.
By operationalizing simulation-based verification through a standardized sequence of Maude commands, we bridge the gap between abstract theoretical relations and executable verification workflows.
This procedural approach enhances the accessibility of protocol verification, making it more intuitive for practitioners to manage compared to purely deductive or symbolic methods that often require intricate manual proofs or complex logical constraints.

Hubert and Fr{\'{e}}d{\'{e}}ric provide a comprehensive review of the state of equivalence checking, which uses relations such as bisimulation and simulation, focusing on its development over a span of 40 years\cite{hubert}.
They trace the evolution of algorithms, methodologies, and software tools that have been developed to address the challenges of equivalence checking in various computing paradigms, including concurrent, probabilistic, stochastic, mobile, and real-time systems.
They emphasize that equivalence checking has the potential to enhance the overall effectiveness of the verification process when closely integrated with model checking and mention that it is expected to play a growing role in the future.
Given this context, combining model checking with simulation has been explored as a promising approach.
In this paper, our proposed technique is one of the first approaches that combine model checking with simulation.
Our approach provides a unified framework within Maude for formal specification, model checking, and the verification of simulation relations.

\section{Conclusion}\label{sec:conclusion}
We have proposed a new technique to mitigate the state space explosion in model checking by combining it with simulation relations.
Our methodology demonstrates that a concrete state machine $M$ can be verified by showing it is simulated by a more abstract state machine $M_A$ that satisfies a corresponding invariant property.
By operationalizing this classical theory into a standardized sequence of Maude commands---specifically \texttt{srew}, \texttt{red}, and \texttt{search}---we have bridged the gap between abstract theoretical relations and executable verification workflows.

We also proposed a technique for the composition of simulation relations, enabling hierarchical abstraction to handle even larger and more complex systems.
The case studies on the Alternating Bit Protocol and the Sliding Window Protocol demonstrate that our approach can formally verify invariant properties where direct model checking is practically infeasible due to state space explosion.
A key advantage of this technique is the reusability of existing formal specifications as abstract models, significantly reducing the verification effort for new systems.
To ensure rigor and efficiency, we developed a support tool that automates case splitting and command execution, minimizing human error.

While this paper focuses on invariant properties, extending our technique to handle liveness properties is an important direction for future research.
Additionally, although the construction of abstract models and the identification of simulation relations currently require manual effort, we aim to develop methodologies to assist users in discovering effective simulation candidates.
Further case studies will be conducted to demonstrate the versatility of our unified framework within Maude for a broader range of complex distributed systems.

\bibliographystyle{unsrt}
\bibliography{reference}

\end{document}